\newcommand{\old}[1]{}
\renewcommand{\emph}[1]{\textit{#1}}
\newsavebox{\mycases}
\newcounter{rot}
\def\occV{{O_v}}
\def\occP{{O_P}}
\newcommand{\ignore}[1]{}
\def\ii_(#1,#2){i_{#1}^{#2}}
\def\a{\alpha}
\def\b{\beta}
\def\d{\delta}
\def\D{\Delta}
\def\e{\varepsilon}
\def\Th{\Theta}
\def\l{\lambda}
\def\om{\omega}
\def\Om{\Omega}
\newcommand{\rdup}[1]{\left\lceil #1 \right\rceil}
\newcommand{\rdown}[1]{\mbox{$\left\lfloor #1 \right\rfloor$}}
\newcommand{\whp}{{\em w.h.p.}\xspace}
\newcommand{\ra}{\longrightarrow}
\newcommand{\rai}{\ra \infty}
\newcommand{\ooi}{(1+o(1))}
\newcommand{\brac}[1]{\left( #1 \right)}
\renewcommand{\Pr}{{\bf Pr}}
\newcommand\bfrac[2]{\left(\frac{#1}{#2}\right)}
\theoremstyle{plain}
\newtheorem{theorem}{Theorem}
\newtheorem*{theorem*}{}
\newtheorem{lemma}[theorem]{Lemma}
\theoremstyle{remark}
\newcommand{\nospace}[1]{}
\newcommand{\beq}[1]{\begin{equation}\label{#1}}
\def\eeq{\end{equation}}
\def\ep{\epsilon}
\newcommand{\E}{\mathbf{E}}
\let\epsilon=\varepsilon
\newcommand{\DD}{\mathbf D_{\text{Disp}}}
\newcommand{\TD}{\mathbf T_{\text{Disp}}}
\newcommand{\NumParticles}{M}  
\newcommand{\NP}{\NumParticles}
\begin{document}
\makeatletter
\title{
Dispersion processes\thanks{
This work was supported by EPSRC grant EP/M005038/1,
``Randomized algorithms for computer networks'',   and Becas CHILE.
}   }
\author{
Colin Cooper\thanks{Department of Informatics, King's College London, UK.
{\tt colin.cooper@kcl.ac.uk}}
\and Andrew McDowell\thanks{Department of Informatics, King's College London, UK.
{\tt andrew.mcdowell@kcl.ac.uk}}
\and Tomasz Radzik\thanks{Department of Informatics, King's College London, UK.
{\tt tomasz.radzik@kcl.ac.uk}}
\and Nicol\'as Rivera\thanks{Department of Informatics, King's College London, UK.
{\tt nicolas.rivera@kcl.ac.uk}}
\and Takeharu Shiraga\thanks{
Department of Information and System Engineering, Chuo University, Japan.
{\tt shiraga@ise.chuo-u.ac.jp}}
}

\maketitle \makeatother
\begin{abstract}
We study a synchronous dispersion process in which $\NP$ particles are initially placed at
a distinguished {\em origin vertex\/} of a graph $G$.
At each time step, at each vertex $v$ occupied by more than one particle at the beginning of this step,
each of these particles moves to a neighbour of $v$ chosen independently and uniformly at random.
The dispersion process ends
at the first step when
each vertex has at most one particle.

For the complete graph $K_n$ and star graph $S_n$, we show that for any constant $\d>1$, with high probability,
if $\NP \le n/2(1-\d)$, then the process finishes in $O(\log n)$ steps,
whereas if $\NP \ge n/2(1+\d)$, then the process needs $e^{\Omega(n)}$ steps to complete (if ever). We also show that an analogous lazy variant of the process exhibits the same behaviour but for higher thresholds, allowing faster dispersion of more particles.
For paths, trees, grids, hypercubes and Cayley graphs of large enough sizes (in terms of $\NP$)
we give  bounds on the time to finish and
the maximum distance traveled from the origin as a function of the number of particles $\NP$.

{\bf Keywords}: random processes on graphs; dispersion of particles; random walk

\end{abstract}

\newpage

\section{Introduction}
A dispersion process can be  described as follows.
Initially  a group of identical particles are located at a single vertex of a graph. The particles move apart in a distributed fashion until
 no more than one particle occupies any vertex. When this occurs we say the particles are dispersed.

We require the behaviour of the particles during dispersion to be  identical,  their movements random, and that no communication, prioritization or other symmetry breaking occurs.
The process we consider, hereafter called Dispersion, works as follows. The process is synchronous and proceeds in discrete steps.
Whenever two or more particles occupy the same vertex at some step, they move independently to a random neighbour. If only a single particle occupies a vertex, it  stays there until another particle arrives.
Thus particles move as a reflex action when two or more particles occupy the same position.
If we reach a situation where each particle is on a different vertex,  there can be no further movement and the particles have dispersed.

For each step at which it moves, each particle makes an independent random walk.
However, the steps at which a particle moves are completely correlated with the arrival of other particles, and so the particles make random walks which stop and start.
Up to the time a particle finally stops, at every step it moved someone else moved with it.

Dispersion is an abstraction of many situations. The simplest ones are from physics. For example when a group of similarly charged particles are held at a single point, and move apart by natural repulsion. Another, concerns dispersion of hard spherical particles (atoms)  which do not allow spatial overlap.
An example from biology is  dispersion of progeny; a clutch of eggs  hatch, and the hatchlings move away, each to establish an exclusive territory.

The dispersion process differs from the type of methods considered previously
for dispersing robots or sensors,
in that we do not explicitly require the particles to disperse uniformly throughout the network,
but merely to move away from one another and establish a personal space.
This means that the degree of self-organization is less than required for swarm systems.
Random  dispersion of swarms  is considered by  \cite{Beal, Beal1}.
The particles use L\'evy Flights to move a random biassed distance $d$ with probability proportional to $1/d$
(within some large finite range).

We suggest that dispersion could be used as a primitive form of load balancing in the absence of symmetry breaking. This assumes that any vertex is willing to process at most one job, but no vertex is prepared to process two or more jobs.

Dispersion is in many ways  a natural  analogue of
Internal Diffusion Limited Aggregation (IDLA).
In IDLA  particles start from the origin vertex one at a time. The next particle does not start until the previous particle stops moving.
Once introduced, the current particle moves randomly until it reaches an unoccupied vertex. It then occupies the vertex permanently and does not move any further. Subsequent particles which arrive at an occupied vertex
continue to walk randomly until they arrive at a vacant site. The process stops when
the last particle settles at a vertex. The IDLA  process was introduced by Diaconis and Fulton. Their paper, \cite{DF}, gives the limiting shape made by the particles on  the integer line. Lawler, Bramson and Griffeath \cite{LBG} subsequently generalized the analysis to $d$-dimensional grids. For two dimensional grids they proved the limiting shape is a disk.  There is also interest in the shapes made by the corresponding  rotor-router analogue of IDLA. For two dimensional grids, Levine and Peres \cite{LP} proved the limiting rotor-router shape is spherical.

This suggests a synchronous version of IDLA, in which if a single particle
occupies a vertex at any step it halts permanently, whereas if two or more particles
arrive simultaneously, or a particle arrives at an already occupied vertex, the new arrival moves independently to a random neighbour. However, because the particles are allowed to behave in an asymmetric fashion (stopping permanently on single occupancy) this model seems less satisfactory than Dispersion.  
In synchronous IDLA, particles move at every step until they stop permanently, whereas in Dispersion particles stop and start, and a temporarily stopped particle can never know that it will move again.
This makes it difficult to relate the walk steps of a particle to the steps of the Dispersion process, and presents an additional obstacle to analysis.

We analyse the synchronous Dispersion process in which $\NP$ particles are initially placed at a single vertex of a graph $G$, which we  call the  origin vertex.
If two or more particles occupy the same vertex
at the end of  step $t$,  then all particles at that position move independently to a random neighbour at step $t+1$. Thus it can be that (by chance) the particles move to the same place and have to move again at step $t+2$, and so on.
 The process ends once the particles have all stopped moving.
This occurs when all vertices are occupied by either one or no particles. Trivially, for the process to end $G$ must have at least as many vertices as there are particles.

We are interested in properties of the process such as the distance the
 particles travel from the origin and the time taken to disperse.
The dispersion time $\TD$ is the number of synchronous time steps taken to
disperse the particles. The dispersion distance $\DD$ is the maximum distance of any particle from the origin at dispersion.

We analyse the performance of dispersion on a number of different graphs,
including the complete graph, the star graph and sufficiently large paths, grids, hypercubes, Cayley graphs and regular trees.
The complete graph $K_n$ exhibits a threshold in dispersion time from $O(\log n)$
when the number of particles $\NP$ is at most $(1-\d)n/2$, to $e^{\Om(n)}$
when the number of particles $\NP$ is at least $(1+\d)n/2$, where $\d >0$ is an arbitrarily small constant.
%
The following theorem is proven in Section~\ref{SKn}.

\begin{theorem} \label{Kn}
For the complete graph $K_n$, and the star $S_n$ the following hold for any constant $\d>0$.

(i) If the number of particles $\NP$ satisfies $\NP/n \le (1/2)(1-\d)$, then
with probability $1 - O(1/n)$, the dispersion process
terminates in $\TD= O(\log n)$ steps.

(ii) If the number of particles $\NP$ satisfies $(1/2)(1+\d) \le \NP/n < 1$,
then there is a constant $c = c(\d) > 0$ such that the probability that
$\TD \le  e^{cn}$ is less than $e^{-cn}$.
\end{theorem}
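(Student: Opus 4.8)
The plan is to analyse the number of occupied vertices as the process evolves, treating the complete graph and the star together since in both cases a particle that moves goes to an essentially uniformly random vertex (for $S_n$, every move from a leaf goes to the center and then out to a uniform leaf, so two consecutive steps behave like one step of the $K_n$ process up to the $\pm1$ discrepancy of including the center). Write $\NP = \alpha n$ with $\alpha$ a constant. At a given step, suppose $k$ particles are ``active'' (sitting on vertices shared with at least one other particle) and they redistribute into the $n$ vertices. The key quantity is the expected number of particles that land alone on a previously empty vertex and thus become permanently settled. For part (i), when $\alpha \le (1-\delta)/2$, a standard balls-in-bins / Poissonization computation shows that a constant fraction (depending on $\delta$) of the active particles settle at each step: if $k$ particles are thrown into $n$ bins, the expected number of singleton bins is about $k e^{-k/n} \ge k e^{-\alpha} $ roughly, and subtracting the $O(\NP)$ already-occupied bins still leaves $\Theta(k)$ fresh singletons as long as the total particle count is safely below $n/2$. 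I would first prove a one-step drop lemma: conditioned on $k \ge 1$ active particles and on the total being $\le (1-\delta')n$ for a suitable $\delta'>0$, with probability $1-e^{-\Omega(k)}$ the number of active particles at the next step is at most $(1-c)k$ for a constant $c=c(\delta)>0$. Iterating this geometric decrease gives that after $O(\log n)$ steps the number of active particles drops below any fixed constant, and then a further $O(\log n)$ steps (or an absorbing-state argument) clears the last few; a union bound over the $O(\log n)$ steps, each failing with probability $e^{-\Omega(k)} = O(1/n)$ once $k = \Omega(\log n)$, and a separate handling of the regime $k = O(\log n)$, yields the $1-O(1/n)$ bound. One subtlety to handle carefully is that the ``already occupied'' vertices are exactly the settled particles, so I must track the invariant (settled) $+$ (active) $= \NP$ and show the active set shrinks geometrically \emph{despite} the growing obstacle set; the inequality $\NP \le (1-\delta)n/2$ is exactly what makes the fresh-singleton count dominate.

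For part (ii), the idea is to show the process essentially never reaches a configuration with few active particles. When $\NP = \alpha n$ with $\alpha \ge (1+\delta)/2$, consider the last step before termination: at that step all $\NP$ particles are on distinct vertices, which in particular means the previous configuration had at least two particles somewhere and they all dispersed to distinct empty-or-to-be-vacated spots. More usefully, I would set up a potential/bottleneck argument: define the set $\mathcal{B}$ of ``bad'' configurations as those with at least, say, $\epsilon n$ active particles (for small $\epsilon = \epsilon(\delta)$), and show (a) the process cannot terminate from inside $\mathcal{B}$ in one step, and (b) from any configuration with between $1$ and $\epsilon n$ active particles, the next step lands back in $\mathcal{B}$ with probability $1 - e^{-\Omega(n)}$. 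Claim (b) is the heart: if $k$ particles (with $1 \le k \le \epsilon n$) are redistributed uniformly among $n$ vertices, of which $\NP - k \ge (1/2 + \delta/2)n - \epsilon n$ are already occupied, then the number of particles that collide (land on an occupied vertex or on a vertex hit by another active particle) is, in expectation, at least $k$ times the occupancy density $\ge k(1/2 + \delta/3)$; more to the point, one shows that with probability $1 - e^{-\Omega(n)}$ the new number of active particles is at least $\epsilon n$, because even a single "re-collision cascade" among $\Theta(n)$ occupied sites regenerates linearly many active particles. I would make this precise via a concentration bound (Chernoff/Azuma on the number of occupied target vertices, or a direct second-moment / Poisson-approximation estimate) showing the number of active particles at the next step is $\ge \epsilon n$ except with probability $e^{-cn}$.

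Given (a) and (b), the process, started with all $\NP$ particles active (certainly $\ge \epsilon n$), must either stay in $\mathcal{B}$ forever or first pass through the ``small but positive'' active regime; but (b) says each attempt to leave via that regime fails to terminate and bounces back to $\mathcal{B}$ except with probability $e^{-cn}$. A union bound over $e^{cn/2}$ steps then shows that within $e^{cn/2}$ steps the process never terminates, except with probability $e^{cn/2} \cdot e^{-cn} = e^{-cn/2}$; rescaling the constant $c$ gives the stated $\TD \le e^{cn}$ with probability $< e^{-cn}$. The main obstacle I anticipate is claim (b) — proving that the active set \emph{regenerates} to linear size rather than merely failing to vanish — since this requires a lower-bound concentration statement (the number of collisions is large with overwhelming probability), which is slightly more delicate than the upper-bound tail needed in part (i); handling the boundary case $\alpha$ close to $1$, where almost every vertex is occupied and one must ensure there is still ``room'' for the $k$ active particles to be thrown, also needs a little care, but there the occupancy density is near $1$ so re-collisions are even more likely and the bound only improves. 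For the star $S_n$ throughout, I would run the same arguments on the ``two-step'' chain and absorb the center-vertex discrepancy into the constants, noting the thresholds are unaffected since they depend only on the leading constant $1/2$.
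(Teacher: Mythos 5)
Your part (i) is essentially the paper's argument (happy/unhappy particles, balls-in-bins expectations for the numbers that settle and that get disturbed), though the paper sidesteps the small-$k$ difficulty you flag: it iterates the conditional expectation $\E(U(t+1)\mid U(t)) \le U(t)(1-\d)$ down to $\E U(t) = O(1/n)$ at $t = (2/\d)\log n$ and finishes with Markov's inequality, rather than proving a per-step high-probability drop and union bounding over steps.

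Part (ii), however, contains a genuine gap: your claim (b) is false. A moving particle that lands on an occupied vertex activates at most one previously settled particle, so the number of active (unhappy) particles can at most double in a single step; there is no ``re-collision cascade'' within one step. Consequently, from a configuration with $k < \e n/2$ active particles it is impossible --- not merely unlikely --- to reach $\e n$ active particles at the next step, so the process cannot ``bounce back'' into $\mathcal{B}$ from the small-active regime, and if it ever reached that regime it would have a non-negligible chance of simply terminating. The correct structure, which is what the paper proves, is that the process never leaves the many-active region in the first place: writing $H(t) = (n/2)(1+\e)$ for the number of happy particles, one shows that for every $\e \le \d/2$ the conditional expectation of $H(t+1)$ is at most $(n/2)(1+\d/2) - \Th(\d^2 n)$, and Azuma--Hoeffding concentration (via the Doob martingale over the movers' choices) then keeps $H$ below the barrier $(n/2)(1+\d/2)$ for $e^{cn}$ steps except with probability $e^{-cn}$. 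Since termination requires $H = \NP = (n/2)(1+\d)$, which lies strictly above the barrier, the number of unhappy particles never drops below $\Th(\d n)$ before time $e^{cn}$, and the regeneration statement you identify as the main obstacle is never needed. Your claim (a) also needs softening (termination from inside $\mathcal{B}$ is exponentially unlikely rather than impossible), but that is minor compared with (b).
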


We also consider a variant process which we call Lazy Dispersion. For some $0<p\leq 1$, a particle which occupies a vertex containing any other particles, instead moves with probability $p$ and stays at its current vertex with probability $1-p$, while particles which occupy a vertex alone, as before, do not move. This model represents a sliding scale, with the behaviour of the process becoming closer to that of IDLA as $p$ tends to $0$.  We prove the following analogous results to the above, demonstrating that a smaller $p$ allows logarithmic dispersion up to a higher threshold.
More precisely, the threshold for $\NP/n$ which separates fast and slow dispersion generalises from $1/2$ to $1 - p/2$.

\begin{theorem} \label{Thm:LazyKn}
Given $0<p\le 1$, which may depend on $n$,
the (lazy) dispersion of $\NP=(1-\delta)n$ particles on the complete graph $K_n$
behaves in the following way.

(i) If $\delta = \frac{p}{2}+\alpha$ for some $\alpha >0$ (which may depend on $n$), then
with probability $1 - O(1/n)$, the dispersion process
terminates in $\TD= O((p\a)^{-1}\log n)$ steps.

(ii) If $\delta = \frac{p}{2}-\alpha$ for some $\alpha >0$ (which may depend on $n$),
then there exists a constant $c>0$, such that the probability
that $\TD \le  e^{c n p^2\a^3}$ is less than $e^{- c n p^2\a^3}$.
\end{theorem}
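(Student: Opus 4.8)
The plan is to track $R_t$, the number of particles that at the end of step $t$ occupy a vertex shared with at least one other particle; these are exactly the particles that attempt to move at step $t+1$ (each independently with probability $p$), and the process has terminated precisely when $R_t=0$. Writing $s_t=\NP-R_t$ for the number of singly occupied vertices, I would (a) prove a one-step drift estimate for $R_t$, (b) prove concentration, and (c) iterate. The threshold $\d=p/2$ comes out of (a): condition on the configuration at the end of step $t$, so that $s_t=(1-\d-R_t/n)n$ vertices are singletons and at least $(\d+R_t/(2n))n$ are empty. An active particle settles at step $t+1$ in one of two ways: (i) it jumps (probability $p$) to an empty vertex hit by no other jumper, which happens for about a $p(\d+R_t/(2n))$ fraction of active particles; or (ii), the mechanism that is vacuous when $p=1$, it is the lone survivor of a pile of size $2$ whose partner jumps away, which happens for about a $p(1-p)$ fraction. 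Acting against these, a jumper landing on a currently singleton vertex un-settles it, affecting about a $p(1-\d)$ fraction. Summing these contributions (and verifying that piles of size $k\ge 3$ contribute only $O(R_t^2/n)$, which one bounds through sums of the form $\sum_v k_v p^{k_v-1}$ over active vertices $v$), one gets
\[
\expect[R_{t+1}\mid\text{config}_t]=R_t\bigl(1-p(2\d-p)\bigr)+O(R_t^2/n)=R_t(1-2p\a)+O(R_t^2/n)
\]
in case (i), and $\expect[R_{t+1}\mid\text{config}_t]=R_t(1+2p\a)-\Theta(R_t^2/n)$ in case (ii); for $R_t=\Om(n)$ the estimate above should be replaced by a cruder direct argument showing that at least an $\Om(p)$ fraction of active particles settles per step (e.g.\ while almost all particles still sit at the origin, a $(1-p)$ fraction is shed each step), which is at least as good as the stated contraction. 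For (b), $R_{t+1}$ is determined by $R_t$ independent per-particle choices, each changing $R_{t+1}$ by $O(1)$, so McDiarmid's inequality gives $|R_{t+1}-\expect R_{t+1}|=O(\sqrt{R_t\log n})$ with probability $1-n^{-\w(1)}$.

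For part (i), as long as $R_t\ge K(\log n)/(p\a)^2$ for a suitably large constant $K$, the drift $2p\a R_t$ dominates both the $O(\sqrt{R_t\log n})$ fluctuation and the $O(R_t^2/n)$ correction, so with high probability $R_{t+1}\le(1-p\a)R_t$; iterating and taking a union bound over the $O((p\a)^{-1}\log n)$ steps brings $R_t$ below $K(\log n)/(p\a)^2$ within $O((p\a)^{-1}\log n)$ steps. For the endgame, once only $\mathrm{polylog}\, n$ active particles remain, collisions among the at most $\mathrm{polylog}\, n$ jumpers on $K_n$ have probability $n^{-\w(1)}$, so each surviving $2$-pile is fully resolved within $O(1/p)$ steps except with probability $n^{-\w(1)}$, and all particles settle within a further $O((p\a)^{-1}\log n)$ steps. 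Summing the failure probabilities over all steps gives total failure probability $O(1/n)$, as required.

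For part (ii) the drift recursion has a stable fixed point $R^{\ast}=\Theta(\a n)$, explicitly $R^{\ast}=2p\a n/C$ where $C=\Theta(p)$ is the coefficient of the $R_t^2/n$ correction, with a linear restoring drift of strength $\Theta(p\a)$ towards $R^{\ast}$ and per-step fluctuations that are sub-Gaussian with variance $O(R^{\ast})$. Starting from $R_0=\NP>R^{\ast}$, the downward drift for $R_t>R^{\ast}$ drives $R_t$ into the band $[\tfrac12R^{\ast},2R^{\ast}]$ within $O(p^{-1}\log n)$ steps with high probability (or $R_0$ already lies there). While $R_t$ stays in this band, $\exp(-\l R_t)$ is a supermartingale for a suitable $\l=\Theta(p\a)$ (combining the restoring drift with the sub-Gaussian increment bound), so optional stopping shows that $R_t$ first drops to $\tfrac12R^{\ast}$ only after $e^{\Om(\l R^{\ast})}$ steps in expectation; since excursions above $2R^{\ast}$ are corrected within $O(p^{-1}\log n)$ steps (the drift there is strongly restoring), a union bound over $T=e^{cnp^2\a^3}$ steps (with a smaller constant $c$, and using $\l R^{\ast}=\Theta(np\a^2)\ge cnp^2\a^3$) shows $R_t$ does not reach $\tfrac12R^{\ast}$ within time $T$ except with probability $e^{-\Om(np^2\a^3)}$. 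Hence $R_t\ge\tfrac12R^{\ast}\ge1$ throughout $[0,T]$, so $\TD>T$ with the stated probability.

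The main obstacle is making the one-step drift estimate fully rigorous and, above all, uniform over every configuration the process can reach: one must control piles of size $k\gg 2$, and most delicately verify that the un-settling contribution is strictly dominated by the two settling contributions exactly when $\d>p/2$ (rather than, say, $\d>1/2$), since this comparison is what produces the new threshold. A secondary difficulty, in part (ii), is locating $R^{\ast}$ and the escape-time exponent precisely enough, and cleanly bridging the transient ``large pile'' phase with the quasi-stationary regime around $R^{\ast}$.
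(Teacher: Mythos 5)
There is a genuine gap in part (i), and it is located exactly where you flag ``the main obstacle'': the one-step drift estimate for your chosen quantity $R_t$ (the number of unhappy particles, $U(t)$ in the paper's notation) is not merely non-uniform over configurations --- it has the wrong sign for configurations the process actually visits. Your $p(1-p)$ ``lone survivor'' term, which is precisely what moves the threshold from $\d>1/2$ to $\d>p/2$, is a per-pile contribution of order $k\,p^{k-1}(1-p)$ for a pile of size $k$, which is essentially zero once $k$ is large. Take a configuration with a single pile of size $k$ (with $2\ll k\ll n$) plus $H$ singletons: newly happy particles come only from jumpers landing alone on empty vertices, about $pk\,(E/n)\approx pk\d$ of them, while about $pk\,(H/n)\approx pk(1-\d)$ jumpers land on singletons and un-settle them, so $\E\,\Delta U\approx pk(1-2\d)$, which is \emph{positive} throughout the regime $p/2+\a\le\d<1/2$ that the theorem addresses. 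Such configurations are unavoidable: the initial state is one pile of size $\NP$, and a macroscopic pile persists at the origin for $\Theta(p^{-1}\log\NP)$ steps. So the problem cannot be repaired by a cruder argument restricted to $R_t=\Om(n)$, nor by sharper estimates on $\sum_v k_v p^{k_v-1}$: the number of unhappy particles genuinely increases in expectation in these states, and no contraction argument for it can go through.

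The paper avoids this by changing the potential function: it tracks the number of occupied vertices $R=H+|\mathcal{V}|$, equivalently the excess $D=\NP-R=\sum_{v}(\occV-1)$. This is insensitive to the event that breaks your drift --- a jumper landing on a singleton changes $U$ but leaves $R$ unchanged --- and the loss term is controlled uniformly over pile structures because an unhappy vertex empties with probability at most $p^{\occV}\le p^{2}$, giving $\E R_-\le \frac{U}{2}p^{2}\bigl(1-\frac{Up}{2n}\bigr)$ against $\E R_+\ge \frac{EUp}{n}\bigl(1-\frac{Up}{2n}\bigr)$, hence $\E\,\Delta R\ge Up\bigl(1-\frac{Up}{2n}\bigr)\bigl(\frac{E}{n}-\frac{p}{2}\bigr)\ge Up\a/2\ge Dp\a/2$; large piles only help here. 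Iterating $\E D(t)\le n(1-p\a/2)^{t}$ and applying Markov's inequality then yields part (i) with probability $1-O(1/n)$ directly, with no concentration, no endgame case analysis, and no union bound over steps. Your part (ii) outline (quasi-stationary point for $U$, supermartingale $\exp(-\l R_t)$, exponential escape time) is a reasonable alternative skeleton to the paper's (which upper-bounds $\E[H'\mid H]$ in two regimes of $\e=U/n$, using the crude configuration-free bound $\E[U'\mid U]\ge U(1-(1-p/n)^{U-1})\ge U^{2}p/3n$ for large $\e$, and applies Azuma plus a union bound over $e^{cnp^{2}\a^{3}}$ steps), but it rests on the same per-configuration drift estimates and needs the same repair before it can be made rigorous.
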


We give a general result for random walks on grids, the hypercube and indeed any other symmetric Cayley graph of an Abelian group.
In this paper the term 'Cayley graph' refers to this type of Cayley graphs.
In such a graph, the simple random walk transition at any vertex is determined by sampling uniformly from
a symmetric generator set $S$  which defines the graph (symmetric means that if $g \in S$, then also  $-g \in S$).
Transitions at vertex $u$ are made to
$v=u+g$
(we use ``$+$'' to denote the group operation),
where $g\in S$ is the group element which labels edge $(u,v)$.
The edge $(v,u)$ from $v$ to $u=v + (-g)$
is also present.
For example, on the line the transitions are defined by $S = \{-1,+1\}$,
so denoting by $X_t$ the position of the random walk at step $t$,
we have $X_{t+1}=X_t \pm 1$, equiprobably.
For the two dimensional (infinite) grid, the transitions are defined by $S = \{ (1,0), (0, 1), (-1,0),(0, -1)\}$.

\begin{theorem} \label{Caley}
Let $\om = \om(\NP) \rai$.

(i) Let $G$ be a $d$-dimensional (infinite) grid ($d \ge 1$) or other infinite Cayley graph,
and let $t$ be such that $t \ge \om \NP^2 R(2t)$,
where $R(2t)$ is the expected number of returns to the origin in $2t$ steps by a simple random walk on $G$.
Then with probability at least $1 - 1/\om$,
a system of $\NP$ particles disperses on $G$ in $t$ process steps.

(ii) Let $G = (V,E)$ be the hypercube on $n=2^d$ vertices, or other $n$-vertex finite Caley graph, let $n' = n/2$, if
$G$ is bipartite, and $n' = n$, if $G$ is non-bipartite, let $P$ be the transition matrix of the random walk on $G$,
and finally, let
$T$ be a step of the random walk
such that for all even $s \ge T$ and all $u \in V$, $|P^s(u,u)-1/n'| \le 1/2n'$.
Then with probability at least $1 - 1/\om$,
a system of $\NP=o(\sqrt{n/\om})$ particles disperses on $G$ in $t=O(T \NP^2)$ process steps.
\end{theorem}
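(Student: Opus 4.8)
The plan is to reduce Theorem~\ref{Caley} to a single inequality that bounds, for each pair of particles, the expected number of steps at which they collide, and then to prove that inequality by viewing the \emph{difference} of the two particles' positions as an ordinary random walk run at a variable speed.

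\textbf{Reduction to a pairwise collision bound.}
Write $X_i(s)$ for the position of particle $i$ at step $s$, call $(\{i,j\},s)$ a \emph{collision event} if $i\neq j$ and $X_i(s)=X_j(s)$, and let $C$ be the number of collision events with $0\le s\le t$. Because a particle moves at step $s{+}1$ only if it shared its vertex with another particle at step $s$, the configuration is frozen from the first step at which every occupied vertex holds exactly one particle; hence if the process has not terminated by step $t$ there is a collision at each of the steps $0,1,\dots,t$ (assuming $\NP\ge 2$, at step $0$ all particles sit at the origin), so $C\ge t+1$. By Markov's inequality,
\[
\Pr[\,\text{not dispersed by step }t\,]\ \le\ \frac{\expect[C]}{t+1}\ \le\ \frac{1}{t}\binom{\NP}{2}\max_{i\neq j}\ \sum_{s=0}^{t}\Pr[X_i(s)=X_j(s)].
\]
It therefore suffices to prove that for every pair $i\neq j$ one has $\sum_{s=0}^{t}\Pr[X_i(s)=X_j(s)]\le R(2t)$ (call this inequality $(\star)$): it turns the bound above into $\tfrac{1}{2t}\NP^2 R(2t)\le\tfrac{1}{2\om}\le\tfrac1\om$ whenever $t\ge\om\NP^2 R(2t)$, which is exactly part (i).

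\textbf{Proof of $(\star)$ by a speed change.}
Fix $i\neq j$ and let $m_i(s)\le s$ be the number of moves particle $i$ has made by step $s$, so that $X_i(s)$ is the endpoint of $m_i(s)$ steps of $i$'s own (repeatedly halted) walk; set $N(s)=m_i(s)+m_j(s)$. List the moves of $i$ and $j$ in chronological order (breaking ties, say, $i$ before $j$), let $g_m\in S$ be the generator sampled at the $m$-th such move, and let $\epsilon_m=+1$ or $-1$ according as that move belongs to $i$ or to $j$; then
\[
X_i(s)-X_j(s)\ =\ \sum_{m=1}^{N(s)}\epsilon_m g_m .
\]
The actor and the source vertex of the $(m{+}1)$-st move are determined by the configuration before it, hence are independent of the fresh uniform generator $g_{m+1}$; since moreover $S=-S$, the variables $h_m:=\epsilon_m g_m$ are i.i.d.\ uniform on $S$. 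Thus $X_i(s)-X_j(s)=\widetilde W(N(s))$ for a genuine simple random walk $\widetilde W$ on $G$ with $\widetilde W(0)=0$. Now $N$ is non-decreasing with $N(0)=0$ and $N(s)\le 2s\le 2t$, and whenever $X_i(s)=X_j(s)$ both particles move at step $s{+}1$, so $N(s{+}1)=N(s)+2$; hence the values $N(s)$ taken at the collision steps $s\le t$ are distinct times in $\{0,1,\dots,2t\}$ at which $\widetilde W$ returns to the origin. Therefore $\#\{s\le t:X_i(s)=X_j(s)\}\le\#\{0\le n\le 2t:\widetilde W(n)=0\}$ pathwise, and taking expectations gives $\sum_{s=0}^{t}\Pr[X_i(s)=X_j(s)]\le\sum_{n=0}^{2t}P^n(0,0)$, which is $(\star)$.

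\textbf{Part (ii) and the main obstacle.}
For (ii) one combines $(\star)$ with the elementary estimate $R(2t)=\sum_{n=0}^{2t}P^n(0,0)\le T+O(t/n')$, obtained from $P^n(0,0)\le 1$ for $n<T$ and $P^n(0,0)\le 3/(2n')$ for $n\ge T$ (the odd returns either vanish, when $G$ is bipartite, or are bounded using the mixing hypothesis at $n\pm1$, when it is not). Since $\NP=o(\sqrt{n/\om})$ forces $\om\NP^2/n'=o(1)$, this yields $\om\NP^2 R(2t)=O(\om T\NP^2)$, so taking $t$ of that order makes the hypothesis of (i) hold and proves (ii). The one genuinely delicate point in the whole argument is the identity $X_i(s)-X_j(s)=\widetilde W(N(s))$ with $\widetilde W$ a \emph{true} random walk: this is precisely where the stop-and-start nature of Dispersion---invisible to the particles themselves---must be handled, and it rests on the source of each move being measurable with respect to the past, together with the symmetry of the generating set (re-signing a uniform generator keeps it uniform). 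The freezing observation, the union bound, and the estimate of $R(2t)$ for finite $G$ are all routine.
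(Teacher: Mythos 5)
Your part~(i) is correct and is essentially the paper's own argument: your interleaved sequence $h_m=\epsilon_m g_m$ is exactly the paper's combined sequence $Y$ (the movements of one particle followed by the negated movements of the other, in time-step order), the identification of collision steps with distinct return times of the combined walk is the same, and the conclusion via Markov's inequality together with ``at least one pairwise collision per step before termination'' is the same. If anything, your formalization via the time change $N(s)$, and the observation that the actor and source of each move are measurable with respect to the past while the fresh generator is not, is a cleaner justification than the paper gives of why the combined increments are i.i.d.\ uniform on $S$ (this is where $S=-S$ and commutativity are used).

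Part~(ii) as you have written it proves a weaker bound than the one stated. Feeding $R(2t)\le T+O(t/n')$ into the hypothesis of part~(i) forces $t\ge \om \NP^2 R(2t)$, hence $t=\Theta(\om T\NP^2)$ rather than the claimed $t=O(T\NP^2)$; since $\om\rai$ this is a genuine quantitative loss. The missing idea (and what the paper's proof of the corresponding hypercube lemma implicitly does) is to treat the burn-in period separately: a fixed pair produces at most one collision event per step, so the steps $0,\dots,T$ contribute at most $\binom{\NP}{2}(T+1)$ collisions \emph{deterministically}, and Markov's inequality need only be applied to the collisions after step $T$, whose expected number is $O(\NP^2 t/n')=o(t/\om)$ because $\NP=o(\sqrt{n/\om})$ and $n'\ge n/2$. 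With probability $1-O(1/\om)$ the total is then at most $\binom{\NP}{2}(T+1)+o(t)<T+t$ as soon as $t\ge 2T\NP^2$, say, which gives the stated $O(T\NP^2)$. (Your parenthetical about odd-time returns in the non-bipartite case also needs one line, e.g.\ $P^{2a+1}(u,u)\le\bigl(P^{2a}(u,u)\,P^{2a+2}(u,u)\bigr)^{1/2}$ for the reversible walk, but that is routine.)
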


\begin{sloppy}

The values of $R(2t)$ for the line, 2-dimensional grid, and grids of dimension at least 3   are $\Th(\sqrt{t})$,
$\Th(\log t)$ and $\Th(1)$ respectively.
This gives values of $t=O(\om^2 \NP^4)$ for the line, $t=O(\om \NP^2 \log M)$ for the 2-dimensional grid
and $t=O(\om \NP^2)$ for grids of  dimension at least 3.
For  the hypercube, $T=O(\log^2 n)$, and for the $n$-cycle $T=O(n^2 \log n)$.
Thus, provided $\NP=o(\sqrt{n})$,
the dispersion time for the hypercube is $O(\NP^2 \log^2 n)$ and for the $n$-cycle is $O(\NP^2n^2\log n)$, with probability
at least $1 - o(1)$.
The proof of Theorem~\ref{Caley} is in Section~\ref{Grid}, part $(i)$, and in Section~\ref{Sec:Hypercube}, part $(ii)$.

\end{sloppy}

We next give more precise results for sufficiently large $k$-regular trees. The case $k=2$, the path graph,
differs from the case $k \ge 3$ and is stated separately.
The proofs are given in Sections  \ref{SkReg} and \ref{SPath}. To remove a factor of $\NP$ or so from the above results takes some work.

\begin{theorem}\label{Path}
For a sufficiently long  path, and $\NP$ particles initially placed at the central vertex of the path, the following holds w.h.p. for any  $\varepsilon>0$.
When the dispersion process terminates, the maximum distance $\DD$ any particle is from the origin is bounded by
\begin{equation}\label{Eq:path}
\rdown{\NP /2 }\le \DD \le 4(1+\varepsilon)\NP\log \NP,
\end{equation}
and $\TD =O( \NP^3 \log \NP)$.
\end{theorem}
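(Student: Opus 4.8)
The lower bound is deterministic. Identify the path with the integers and the origin with $0$. When the process halts the $\NP$ particles occupy $\NP$ distinct integers, so some occupied integer has absolute value at least $\lceil(\NP-1)/2\rceil=\lfloor\NP/2\rfloor$, giving $\DD\ge\lfloor\NP/2\rfloor$. The real content is the two upper bounds, and by the left-right symmetry of the path it suffices to control the particles that end up (and the work done) on the right of the origin.

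For the distance bound I would work with edge-crossing counts. For $r\ge 0$ let $C_r$ and $C'_r$ be the total numbers of rightward and leftward moves, respectively, across the edge $\{r,r+1\}$ over the whole run, and let $F_r=\sum_{t\,:\,f(r,t)\ge 2} f(r,t)$ count the (particle, firing) incidences at vertex $r$, where $f(v,t)$ is the occupancy of $v$ at time $t$. Three elementary observations drive the argument. (a) When $r$ fires, every particle at $r$ leaves, so each of the $C_{r-1}+C'_r$ visits to $r$ feeds at most one firing there: $F_r\le C_{r-1}+C'_r$. (b) Conservation across $\{r,r+1\}$ gives $C_r=C'_r+n_{r+1}$, where $n_{r+1}\ge 0$ is the number of particles at positions $\ge r+1$ at termination. (c) The $F_r$ incidences carry independent fair left/right choices and $C_r$ counts the right ones, so a martingale concentration bound yields $C_r=\tfrac12 F_r\pm O(\sqrt{F_r\log\NP})$ with high probability, provided individual firings at $r$ are not too large. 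Combining (a)--(c) (and using $F_r\le C_{r-1}+C_r$) gives the recursion
\begin{equation*}
C_r+n_{r+1}\ \le\ C_{r-1}+O\!\left(\sqrt{C_{r-1}\log\NP}\right),
\end{equation*}
so the nonnegative sequence $(C_r)$ shrinks, modulo fluctuations of order $\sqrt{C_{r-1}\log\NP}$, at cumulative rate $\sum_r n_{r+1}=\sum_i(x_i-1)^+$, where $x_i$ is the final position of particle $i$. Feeding in a base estimate $C_0=O(\NP^2)$ for the number of crossings of the origin edge and iterating forces $C_r=0$, i.e.\ no particle ever reaches vertex $r+1$, once $r>4(1+\varepsilon)\NP\log\NP$; the constant $4(1+\varepsilon)$ is exactly the slack that lets the accumulated fluctuations be beaten, after a union bound over the $O(\NP\log\NP)$ relevant values of $r$. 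Mirroring on the left completes the bound on $\DD$. The same iteration also yields geometric-type tails: with high probability the number of particles ever at position $\ge r$ is $O(\NP\,e^{-\Omega(r/\NP)})$, and in particular $\sum_i|x_i|=O(\NP^2)$.

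For the time bound, condition on the (high-probability) event that every particle stays in $[-d,d]$ with $d=4(1+\varepsilon)\NP\log\NP$ and that $\sum_i|x_i|=O(\NP^2)$ at termination. Take $\Phi_t=\sum_{i=1}^{\NP}x_i(t)^2$. Moving one active particle from $x$ to $x\pm1$ changes $x^2$ by $\pm2x+1$, so $\eee[\Phi_{t+1}-\Phi_t\mid\cF_t]$ equals the number of particles active at step $t$; equivalently, $\Phi_t$ minus the number of particle-moves made up to time $t$ is a martingale. Every step before termination has a vertex of occupancy $\ge2$, hence at least two active particles, so the number of particle-moves up to $\TD$ is at least $2\TD$; in expectation it equals $\Phi_{\TD}$, and on the conditioning event $\Phi_{\TD}=\sum_i x_i^2\le\DD\cdot\sum_i|x_i|=O(\NP\log\NP)\cdot O(\NP^2)=O(\NP^3\log\NP)$. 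Turning this into a high-probability bound -- via a concentration inequality for the above martingale (whose quadratic variation is itself controlled by $\Phi$), or by stopping at $c\NP^3\log\NP$ steps, applying Markov and restarting -- gives $\TD=O(\NP^3\log\NP)$.

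The crux is the distance bound, specifically making the crossing recursion rigorous. One must (i) handle the reinforcement of vertex $r$ by particles returning from the right, which is what blocks a naive geometric decay of $C_r$; (ii) bound the accumulated martingale fluctuations $\sum_r O(\sqrt{C_{r-1}\log\NP})$ against the leakage $\sum_r n_{r+1}$ and the base estimate $C_0=O(\NP^2)$; and (iii) deal with the self-referential nature of step (c), whose martingale increments are as large as the firing sizes at $r$ -- so one inducts on $r$, using that only few particles sit far out to keep those firings small. This is the ``work'' alluded to just before the statement; given the distance bound and the estimate $\sum_i|x_i|=O(\NP^2)$, the time bound is comparatively routine.
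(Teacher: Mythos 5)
Your lower bound and your reduction of the time bound to ``at least two particles move per step'' match the paper, but your route to the distance bound is entirely different from the paper's, and as sketched it has two load-bearing gaps. The paper's argument never looks at edge-crossing counts: it reverses the walk of one particle in each pair so that meetings of particles $i,j$ become returns to the origin of a single combined simple random walk, proves a lemma bounding the probability of more than $\a\sqrt{2T}$ returns in $2T$ steps, and then uses the fact that every walk step of a particle is caused by a meeting to get the self-consistency bound $Z_i\le\sum_{j\ne i}R_{ij}\le \NP\a\sqrt{2S}$; choosing $S=2\a^2\NP^2$ with $\a^2=4(1+\e)\log\NP$ makes $S+1\le \NP\a\sqrt{2S}$ impossible, so no particle makes more than $S=O(\NP^2\log\NP)$ walk steps, and a Chernoff bound on the displacement of a walk of that length gives $\DD\le 4(1+\e)\NP\log\NP$ and $\TD\le \NP S/2$.

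The first gap in your plan is the base estimate $C_0=O(\NP^2)$. You give no argument for it, and it is not elementary: the number of crossings of the edge $\{0,1\}$ is of the order of $\NP$ times the number of visits to the origin per particle, and bounding that already requires controlling the total walk length of each particle --- which is exactly what the paper's meeting-count argument supplies. Obtaining $C_0=O(\NP^2)$ independently of such a bound would be circular or would need a new idea. The second and more serious gap is the claim that the recursion $C_r+n_{r+1}\le C_{r-1}+O(\sqrt{C_{r-1}\log\NP})$ forces $C_r=0$ by $r=4(1+\e)\NP\log\NP$. The only systematic decrease in that recursion is the leakage term $n_{r+1}$, and $n_{r+1}=0$ for all $r$ beyond the largest final position; past that point $C_r$ is a nonnegative sequence with $\E[C_r\mid C_{r-1}]\le C_{r-1}$ and fluctuations of order $\sqrt{C_{r-1}}$, i.e.\ a critical-branching-type process, whose extinction time from level $C$ is of order $C$ (survival to generation $m$ has probability of order $C/m$), not logarithmic. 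So unless the leakage $\sum_r n_{r+1}$ cancels $C_0$ essentially exactly --- a matching of constants you have not argued --- the recursion only yields $C_r=0$ for $r=\tilde{O}(\NP^2)$, which is far weaker than the claimed $O(\NP\log\NP)$. The derived claims ($\sum_i|x_i|=O(\NP^2)$ and hence the $O(\NP^3\log\NP)$ rather than $O(\NP^3\log^2\NP)$ time bound) inherit this gap. I would recommend either repairing the recursion with a genuine quantitative extinction mechanism, or switching to the reversed-walk meeting-count argument.
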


The lower bound on the dispersion distance $\DD$
in Theorem \ref{Path} comes from the following simple observation, which applies to any graph.
Let $d(\NP)$ be the minimum graph distance about the origin $v$ such that the subgraph $S(v)$ induced by vertices of distance at most $d(\NP)$ from $v$  contains at least $\NP$ vertices, then $\DD \ge d(\NP)$. Thus for a path graph $\DD\ge \NP/2$, and for $k$-regular trees $\DD \ge \log_{k-1} \NP$.
The proof of the upper bound in~\eqref{Eq:path} implies that during the dispersion process on an infinite path \whp\
no particle is ever further away from the origin than at distance $6\NP\log \NP$.
That is, ``sufficiently long path'' in the statement of the theorem means a path of length at least $12\NP\log \NP$,
and the same bounds on $\DD$ and $\TD$ apply to cycles of at least this length.

\begin{theorem}\label{kReg}
Let $k \ge 3$.
There exist constants $0 < \beta_k < \alpha_k$, with $\alpha_k\rightarrow 0$ as $k\rightarrow \infty$,
such that for a sufficiently large  $k$-regular tree,  and $\NP$ particles initially placed at the central vertex of the tree,
the following
holds for any  constant $\varepsilon>0$ with probability $1 - O(\NP^{-\ep})$. When the dispersion process terminates, the maximum distance $\DD$ any particle
is from the origin is bounded by
\begin{equation}\label{Eq:tree}
\left(2-\alpha_k -\varepsilon \right)\log_{k-1} \NP \le \DD \le\left(2 -\beta_k +2\varepsilon\right)\log_{k-1} \NP,
\end{equation}
and $\TD =O(\NP \log_{k-1} \NP)$.
\end{theorem}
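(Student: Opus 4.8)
The plan is to treat the two bounds on $\DD$ separately from the bound on $\TD$, with the distance bounds carrying almost all the work. Orient the tree at the origin, so every non‑origin vertex $v$ has a depth $|v|$, one neighbour towards the origin and $k-1$ away from it. The first thing to record is a purely combinatorial reduction: a particle can occupy a vertex at depth $d+1$ only if, at some earlier step, a \emph{clump} (two or more particles sharing a vertex) sat at a depth-$d$ vertex and one of its particles stepped outward; hence $\DD\le d$ as soon as no depth-$d$ vertex ever carries a clump. Conversely, since a particle that lands alone on $v$ stays there until another particle arrives, a vertex $v$ ever carries a clump \emph{if and only if} at least two distinct particles ever visit $v$. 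So both distance bounds reduce to estimating, for a chosen depth $d$, the event $\cC_d=\{\text{some depth-}d\text{ vertex is visited by}\ge 2\text{ particles}\}$ (or its complement); crucially, this reformulation is ``static'' and spares us a union over time.

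The technical heart is that particle trajectories in Dispersion are not independent, so I would first set up a comparison with independent walks. The rule ``a particle that is alone freezes forever'' lets one couple each particle's Dispersion trajectory with a walk that is stochastically no larger (in the sense of the set of sites visited) than a genuinely free walk. On the $k$-regular tree a free walk started at the origin reaches a fixed depth-$d$ vertex with probability exactly $(k-1)^{-d}$ (solve the one-step escape recursion $\phi=\tfrac1k+\tfrac{k-1}{k}\phi^2$, take the transient root $\phi=\tfrac1{k-1}$), so $\expect[\#\{\text{particles visiting a fixed depth-}d\text{ vertex}\}]\le \NP(k-1)^{-d}$. This already explains the leading constant $2$: to create a clump at a depth-$d$ vertex \emph{two} particles must reach it, which costs $(k-1)^{-2d}$; balancing against $\Theta((k-1)^{d})$ vertices and $\Theta(\NP^2)$ pairs gives the crossover at $(k-1)^{d}\asymp \NP^2$, i.e.\ $d\asymp 2\log_{k-1}\NP$. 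To move the threshold off $2$ by the $k$-dependent amounts $\alpha_k,\beta_k$ one has to go beyond the free-walk bound and quantify \emph{absorption}: as the particles fan out, a $k$-dependent fraction of those present at each level settles and is removed from play, so the number $N(d)$ of particles that ever reach depth $d$ decays strictly faster than $\NP(k-1)^{-d}$. I would prove a self-improving (bootstrap) estimate on $N(d)$, tight enough to produce the correct exponent; $\beta_k,\alpha_k\to 0$ as $k\to\infty$ precisely because escape becomes near-ballistic and the absorption correction becomes negligible on the logarithmic scale.

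For the upper bound I would then union-bound $\Pr[\cC_d]\le k(k-1)^{d-1}\binom{N'}{2}q_d^{\,2}$, where $q_d$ is the improved single-particle reach probability and $N'$ the bootstrapped population bound; for $d=d_{\max}:=(2-\beta_k+\varepsilon)\log_{k-1}\NP$ this is $O(\NP^{-\varepsilon'})$, giving $\DD\le d_{\max}<(2-\beta_k+2\varepsilon)\log_{k-1}\NP$ with probability $1-O(\NP^{-\varepsilon})$. For the lower bound I would run a second-moment argument on $X_d:=\#\{\text{depth-}d\text{ vertices reached by}\ge 2\text{ particles}\}$: at the relevant depth the population is still close to $\NP$ (absorption has barely started), so $\expect[X_d]\asymp(k-1)^d(\NP p_d)^2$ with $p_d$ the per-particle reach probability, and one checks $\expect[X_d]\ge 1$ down to $d=(2-\alpha_k-\varepsilon)\log_{k-1}\NP$; since two distinct depth-$d$ cones are disjoint, the dominant correlation in $\expect[X_d^2]$ comes only through the shared prefix, which gives $\expect[X_d^2]=O(\expect[X_d]^2)$ and hence $X_d\ge1$, i.e.\ a clump at depth $d$, hence a particle settling at depth $\ge d-O(1)$ (absorb the $O(1)$ into $\varepsilon$). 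Finally $\TD$: while active a particle performs an outward-biased stop–start walk, so whp it is active for only $O(\text{final depth})=O(\log_{k-1}\NP)$ steps with exponential tails; summing, the total number of particle-moves is $O(\NP\log_{k-1}\NP)$ whp, and since at every non-terminal step at least one particle moves, $\TD=O(\NP\log_{k-1}\NP)$.

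The step I expect to be the main obstacle is the one in the second paragraph: decoupling the particle trajectories while retaining enough precision to \emph{see} the absorption correction — that is, simultaneously justifying the domination by independent walks and proving a bound on $N(d)$ sharp enough to yield the exact constant $2-\beta_k$ (resp.\ $2-\alpha_k$) rather than merely the factor-$2$ leading term. Controlling the total activity for the $\TD$ bound is a secondary but related difficulty for the same reason, since a fixed particle's activity is not independent of the behaviour of the others.
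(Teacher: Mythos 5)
Your static reduction (a vertex is ever multiply occupied iff two distinct particles ever visit it), the gambler's-ruin computation giving reach probability $(k-1)^{-d}$ for a fixed depth-$d$ vertex of a predetermined free walk, and the resulting first-moment/union-bound explanation of the leading constant $2$ all match the paper. The gap is exactly where you predicted it, and it is fatal to both non-trivial constants. For the lower bound, your second-moment argument on $X_d$ needs a \emph{lower} bound on $\Pr[\text{two Dispersion particles both visit }v]$, and no estimate of the form $(\text{population})\times p_d$ with a per-particle reach probability can supply it: the coupling with predetermined walks only says the Dispersion trajectory is a \emph{prefix} of the free walk, so free-walk probabilities give upper bounds on visit events, never lower bounds. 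The paper's mechanism is different in kind: two particles penetrate deep essentially only by \emph{travelling together} (once they separate, each is alone and freezes unless a third particle intervenes, which a separate lemma excludes beyond depth $\tfrac{3+\varepsilon}{2}\log_{k-1}\NP$). The per-level probability of a pair staying together while moving outward is $\tfrac{k-1}{k}\cdot\tfrac1k$, so a pair advances $d'$ further levels with probability $\NP^{-(2\log_{k-1}k-1)d'/\log_{k-1}\NP}$; running $\Theta(\NP)$ independent pair-trials at depth $(1-\varepsilon/2)\log_{k-1}\NP$ (where $\NP-o(\NP)$ particles must clump) yields one pair that travels an extra $\frac{1-\varepsilon/2}{2\log_{k-1}k-1}\log_{k-1}\NP$ levels, which is where $\alpha_k=1-\frac{1}{2\log_{k-1}k-1}$ comes from. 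Your heuristic $\E[X_d]\asymp(k-1)^d(\NP p_d)^2$ with $p_d=(k-1)^{-d}$ would place the threshold at $2\log_{k-1}\NP$, which is \emph{stronger than the truth the paper can prove} precisely because independent walks recross far more easily than Dispersion pairs can cohere; no ``absorption''/population correction $N(d)$ fixes this, since the obstruction is pairwise, not a head count.

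The same travelling-together quantity, not an improved single-particle $q_d$, produces the upper-bound constant. The paper takes $d=(\tfrac53+\varepsilon)\log_{k-1}\NP$, notes that first-visit locations at depth $d$ are uniform and independent over $\Theta(\NP^{5/3+\varepsilon})$ vertices, so only $\Theta(\NP^{1/3-\varepsilon})$ colliding pairs form (a no-three-particles lemma at depth $\ge\tfrac{3+\varepsilon}{2}\log_{k-1}\NP$ rules out larger clumps and rules out pairs climbing back up to recruit help); each colliding pair then advances at most $(\tfrac13+\varepsilon)\log_k\NP$ further before separating, because the together-probability decays like $k^{-d'}$. Summing gives $\bigl(\tfrac53+\tfrac{1}{3\log_{k-1}k}+2\varepsilon\bigr)\log_{k-1}\NP$, i.e.\ $\beta_k=\tfrac13-\tfrac{1}{3\log_{k-1}k}$. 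Your proposed bound $\Pr[\mathcal{C}_d]\le k(k-1)^{d-1}\binom{N'}{2}q_d^2$ has no identified source for the required gain in $q_d$ or $N'$. Your $\TD$ argument (outward drift gives $O(\log_{k-1}\NP)$ walk steps per particle with exponential tails, then sum) is consistent with the paper's. To complete your proof you would need to replace the independent-walk/second-moment core with an analysis of how long a \emph{pair} of particles can be forced to (lower bound) or can manage to (upper bound) move in lockstep.
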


The proof of Theorem~\ref{kReg} uses infinite trees and shows not only that \whp\ no particle will end up at the termination of the
Dispersion process
further away from the origin than as given in~\eqref{Eq:tree}, but also that
\whp\ no particle will be at any time {\em during\/} the process further away from the origin than this bound.
Thus a ``sufficiently large $k$-regular tree'' in the wording of Theorem~\ref{kReg} means a complete $k$-regular tree
of depth at least as the upper bound on $\DD$ in~\eqref{Eq:tree}
(all leaves are at the same distance from the central root vertex and each internal node, including the root, has
degree~$k$).


To distinguish between the dispersion process and the random walks made by the particles we call the steps $t=0,1,2,...$ of the dispersion process, {\em time steps\/} or {\em process steps},
and the steps of the walks, {\em walk steps}. The time steps $t$ go on forever, but after dispersion the particle
locations do not change.

Let $N_i(t)$ be the number of walk steps taken by particle $i$ at or before time step $t$. At each time step  during dispersion at least two particles
move.
Thus $\sum_{i=1}^{\NP} N_i(t) \ge 2t$, and for some particle $i$, $N_i(t)\ge 2t/\NP$.
If \whp\ no particle makes more than $x$ walk steps during the dispersion process, then
\whp\ $\TD \le Mx/2$.
For example, in the proof of Theorem~\ref{kReg}, the bound on $\TD$ follows from a
$O(\log_{k-1} \NP)$ bound on the number of walk steps any particle can make before the dispersion terminates.
In the proof of Theorem~\ref{Path}, the core argument is that
each particle makes $O(\NP^2\log \NP)$ walk steps. This implies
an upper bound $O(\NP^3\log \NP)$ on $\TD$ and an upper bound $O(\NP\log \NP)$ on $\DD$.


It is a condition of the dispersion process that the particles make
independent random walks whenever they move on the underlying graph $G$. To remove any suspicion of correlation between the walks we adopt the following device, and
predetermine the movements the particles will take when they move. For each particle, independently predetermine an infinite random walk on $G$. Whenever the particle is required to move in the dispersion process, it reads the next movement from its own random walk and follows it. In this way the walk is independent of the dispersion process and the movement of any other particle. However the number of steps taken by the walk at a given step of the dispersion process and the step of the walk at which the particle will stop forever is entirely determined by the underlying dispersion process.

Theorem~\ref{Kn} says that the Dispersion process on the complete graph $K_n$ has the threshold at $\NP \sim (1/2)n$.
It seems reasonable to ask if the existence of such a threshold is a general phenomena, and if so,
to define the  dispersion number of a finite graph as the (limiting) maximum proportion of
particles which can be dispersed on  graphs of this type in an expected number of steps polynomial in the number of vertices.
For $K_n$ the dispersion number is $1/2$, whereas experimentally, the dispersion number of the cycle $C_n$ is at least $0.89$.
As mentioned earlier, the proof of Theorem~\ref{Path} implies that $\NP = (c/\log n) n$ particles disperse on the $n$-vertex cycle
in polynomial time.
We leave as an open question whether cycles or other graphs (significantly different than complete graphs and stars)
have constant dispersion numbers.

\paragraph{Proof methodology.}

In this paper we utilise a number of different methods and techniques to analyse the
behaviour of the dispersion process on different graph structures.
Our methods in bounding the maximum distance a particle can travel largely fall into two categories.

The first, which we utilise on the line, grid and Cayley graphs involves bounding the number of meetings between particles, which allows us to find a bound on the total running time of the process. We utilise that these graphs have high levels of symmetry allowing us to treat the number of meetings of two particles as the number of returns to the origin of a combined walk, reversing one of the particles movements. We know that if the process has not ended, then at least one meeting of two particles occurs for every time step that takes place but if the number of meetings a particle encounters grows much more slowly than the number of steps it takes, this will lead us to a contradiction. For the path we are able to give reasonably tight bounds on the number of meetings of two particles, giving bounds close to the right order of magnitude. For other graphs, determining better bounds on the number of meetings of two particles would immediately allow for better bounds on the dispersion time.

Our second method, which we utilise on the $k$-regular tree, is to take advantage of the branching structure and the fact that there are many distinct vertices a large distance from the origin. In particular, we use that at some point for a particle to reach a distance $d$ and continue moving, another particle must also visit the same vertex at $d$. More strongly it is possible to say that there must exist two particles, who when they first reach any vertex at distance $d$, do so at the same vertex. Since in the tree it is equally likely that a particle ends up at any particular vertex at this depth first, and the number of such vertices is much higher than the number of particles, the probability of this occurring is small.

The first method allows us to bound the total running time of the process, which we may be able to use in turn to bound the distance a particle can move. Conversely, the second gives direct bounds on the distance a particle can move, from which we may also be able to deduce bounds on the running time.

The hypercube is an interesting example because it satisfies both of these properties and is a good example of how these two methods can be applied.
In Section~\ref{Sec:Hypercube}, we derive bounds $O(\NP^2 \log^2 n)$ and $O(\NP\log^3 n)$
using these two methods, respectively.

\section{Dispersion on the complete graph } \label{SKn}
A proof of Theorem \ref{Kn} for the complete graph $K_n$ is given in this section.
To keep the proof tidy, we first analyse the case of $K_n$ with loops. Details for $K_n$ without loops are given afterwards.

We say a particle is happy at step $t$
if it is the only particle at its vertex, and unhappy otherwise. A particle which is happy stays put and only moves if another (currently unhappy) particle moves to the vertex it occupies.

The behavior of dispersion on the star graph $S_n$ is almost identical to that on $K_n$ (without loops). It follows from the observation that, at alternate steps any unhappy particles congregate at the central vertex, and then jump to a random leaf.

{\bf Proof of Theorem \ref{Kn}, case of $K_n$ with loops.}
Let $\NP$ be the total number of particles.
A particle is happy (at a given step) if it is the only particle at its current vertex, otherwise it is unhappy. Let
 $H(t)$ be
the number of happy particles at step $t$, and $U(t)$ the  unhappy ones.
Thus $H(t)+U(t)=\NP$. The process ends when $H(t)=\NP$.

In what follows, we bound the value of $H(t+1)$ given the values of
$H(t)$ (and thus $U(t)$) at step $t$. 
At each time step $t$ any  unhappy particle moves to a random vertex $v \in [n]$. The particles which are happy do not move.
Suppose $U(t)>0$. At the next step there are $X=X(t)$ previously happy particles which became unhappy because (unhappy) particles landed on top of them. Also $Y=Y(t)$ previously unhappy particles became happy by being the only particle to move to one of the $n-H(t)$ unoccupied vertices. This gives $H(t+1)=H(t)-X+Y$.
We obtain $\E(X(t) \mid H(t)), \E(Y(t) \mid H(t))$ and hence $\E(H(t+1) \mid H(t))$. 
To simplify notation we do not explicitly state the conditioning on $H(t)$,
and we abbreviate $H(t)$ and $H(t+1)$ to $H$ and $H'$, respectively.

The properties of $X, Y$ are as follows. 
If we randomly allocate $U$ balls to $n$ boxes, of which $H$ are non-empty, 
the number $X$ of the  non-empty boxes $H$ receiving at least one ball (resp.
the number of empty boxes  $Y$ receiving exactly one ball) have expected values
\begin{eqnarray}
\label{EX}
\E X &=& H \brac{1-\brac{1-\frac1n}^U},\\
\label{EY}
\E Y &=& U\bfrac{n-H}{n} \brac{1- \frac{1}{n}}^{U-1}.
\end{eqnarray}

\begin{sloppy}

The concentration of 
$H' = H - X + Y$ follows from considering 
the Doob martingale 
$B_i = \E_{Z_{i+1},\ldots,Z_{U}} [H'| Z_1, Z_2, \ldots , Z_{i}]$, where $Z_i$ is the box (vertex) chosen by the $i$-th unhappy ball.
Thus $B_0 = \E[H']$, $B_{U} = H'$, $\E B_i = B_{i-1}$, and
$|B_{i} - B_{i-1}| \le 2$ because a difference in choice of bin by ball $i$ (with all other choices remaining the same) 
can only alter the value of $H'$ by at most 2. 
The Azuma-Hoeffding inequality implies
\begin{equation}\label{martingale}
\Pr(|H'-\E H'| \ge \l) \le 2 \exp \brac{-\frac{\l^2}{8U}}.
\end{equation}
\end{sloppy}


Let $\D H=H(t+1)-H(t)=Y-X$.
\begin{flalign}
\E \D H &= \frac{U}{n}(n-H) \brac{1- \frac1n}^{U-1} -H\brac{1-\brac{1-\frac1n}^U} \label{EDH1}\\
&= \brac{1-\frac1n}^U\brac{U+H- \frac{U(H-1)}{n-1}}-H, \label{EDH2}
\end{flalign}

{\bf Case $\NP\le(1-\d)n/2$.}\\
 For $U \ge 0$, $(1-1/n)^U \ge (1-U/n)$.
Substituting this into \eqref{EDH2}, and using $U+H=\NP$ gives
\begin{flalign*}
\E \D H &\ge \brac{ 1-\frac{U}{n}}\brac{U+H-\frac{UH}{n}}-H\\
&= U \brac{ 1-\frac{\NP}{n}-\frac{\NP-U}{n}\brac{1-\frac{U}{n}}}.
\end{flalign*}
Thus
\[
\E (U(t+1) \mid U(t)) \le \frac{U}{n}\brac{\NP+(\NP-U) \brac{1-\frac{U}{n}}} \le U(t) \frac{2M}{n}.
\]
Now use $\NP \le (n/2)(1-\d)$ and iterate to get
\[
\E U(t) \le U(0)\brac{1-{\d}}^t \le U(0) \exp\brac{-{t \d}}.
\]
Choosing
\begin{equation}\label{Less}
t=(2/\d) \log n
\end{equation}
gives $\E U(t)=O(1/n)$ and thus $U(t)=0$, and hence $\TD \le t$, with probability $1- O(1/n)$.

{\bf Case $\NP\ge (1+\d)n/2$.}\\
Let $H(t)=(n/2)(1+\e)$,
where  $-1\le \e \le \d/2$.
We prove below that
\begin{equation}\label{ED}
\E H(t+1) \le \ooi \frac{n}{2} \brac{
1+\frac{\d}{2} \brac{1-\frac{3\d}{8}}}.
\end{equation}
By  the concentration of $H(t+1)$
(see discussion below \eqref{EY}),
for $\e \le \d/2$
\begin{equation}\label{More}
\Pr\brac{H(t+1) \ge \frac{n}{2} \brac{
1+\frac{\d}{2}}}
 \le   e^{-2cn},
\end{equation}
for some constant $c = c(\d) > 0$.
To disperse the particles requires $H$ to equal $\NP=(n/2)(1+\d)$.
The Inequality~\eqref{More} implies, however, that
$H$ remains below $(n/2)(1+\d/2)$ for $e^{cn}$ steps with
probability at least $(1 - e^{-2cn})^{e^{cn}} \ge 1 - {e^{cn}}$.

{\bf Proof of  equation \eqref{ED}.}
 From \eqref{EDH2}, with $U=(n/2)(\d-\e)$,  we have that
\[
\E H(t+1)\le  \ooi e^{-\frac{U}{n}}\brac{\NP-\frac{U(\NP-U)}{n}}= \ooi \frac{n}{2} A(\e, \d),
\]
say, where
\begin{flalign*}
A(\e,\d) & \le e^{-\frac12(\d-\e)}\brac{(1+\d)-\frac12(\d-\e)(1+\e)}\\
&= e^{\e/2-\d/2}\brac{1+\frac{\d}{2}+\frac{\e}{2}(1-\d)+\frac{\e^2}{2}}.
\end{flalign*}
Thus $A(\e,\d)$ is monotone increasing in $\e$ and for $\e \le \d/2$,
\[
A(\e,\d) \le A(\d/2,\d)= e^{-\d/4}\brac{1+\frac{3\d}{4}-\frac{\d^2}{8}}.
\]
For $0 \le x \le 1$, $e^{-x} \le 1-x+x^2/2$, so that
\begin{flalign*}
A(\d/2,\d) & \le \brac{1-\frac{\d}{4}+\frac{\d^2}{32}}\brac{1+\frac{3\d}{4}-\frac{\d^2}{8}}\\
&= 1+\frac{\d}{2}-\frac{\d^2}{32}\brac{9-\frac{7\d}{4}+\frac{\d^2}{8}}.
\end{flalign*}
However $9-\frac{7\d}{4}+\frac{\d^2}{8}$ is monotone decreasing in $\d$ for $0 \le \d \le 1$ and
\[
A(\d/2,\d) \le 1+\frac{\d}{2} \brac{1-\frac{6\d}{16}}.
\]

{\bf Case of  $K_n$ without loops.} For $\E X$ in \eqref{EX},  the value of $(1-1/n)$  becomes $(1-1/(n-1))$.
The effect on $Y$ is to slightly increase the value of $\E Y$ in \eqref{EY} as follows.
Let $V(H)$ be the happy vertices, and $u(v)$ the number of unhappy particles at $v \in V$. The upper tail of $u(v)$ is stochastically dominated by $Z \sim Bin(\NP,1/(n-1))$. Using a Chernoff bound that $\Pr(Z \ge \a \mu)\le (e/\a)^{\a\mu}$
\[
\Pr(Z \ge n/\log n) \le \bfrac{Me\log n}{n(n-1)}^{n/\log n}=e^{-\Th( n)}.
\]
Thus (w.h.p.)
\begin{flalign*}
\E Y &=\sum_{i \in U} \sum_{v \in [n]-V(H)-\{v_i\}} \frac{1}{n-1}\brac{1-\frac{1}{n-1}}^{U-u(v)-1}\\
&=(1+O(1/\log n))\;U\brac{1-\frac{H}{n}}\brac{1-\frac1n}^U,
\end{flalign*}
where $v_i$ is the vertex currently occupied by unhappy particle $i$.
The rest of the proof is the same.

\subsection{Lazy Dispersion on $K_n$}\label{Section:LazyDispersion}
We have shown that  on the complete graph, the dispersion process disperses the particles in logarithmic time (w.h.p.) if the number of particles is less than half the number of vertices. If the number of particles is more than half the number of vertices, there is  a double exponential leap, as it now requires exponential time (w.h.p.) to disperse the particles.
We next show that, perhaps counter intuitively, slowing down the particles can allow the process to disperse more quickly. More precisely,
we show that if instead of all unhappy particles moving, each unhappy particle moves with some probability
$0 < p\le 1$, then for suitable choices of $p$, we can disperse many more particles in logarithmic time.

To have some intuition as to why slowing particles down may speed up the process, consider that for small enough $p$, we can assume at any time step that at most one particle moves with high probability. In this range, we have a process in which particles that are happy stay still and at most one unhappy particle moves. This ensures that any vertex that is occupied will never become unoccupied. This is identical, other than the order that the particles move and that at some time steps, nothing changes, to the IDLA process, which we know completes in polynomial time, even for $\NP=n$, and the process has only been slowed by a factor of $p$.

As before, we let $\NP$ be the total number of particles.
A particle is happy (at a given step) if it is the only particle at its current vertex, otherwise it is unhappy. Let
 $H(t)$ be
the number of happy particles at step $t$, and $U(t)$ the  unhappy ones.
Thus $H(t)+U(t)=\NP$. The process ends when $H(t)=\NP$.
At each time step $t$ any  unhappy particle moves to a random vertex $v \in [n]$ with probability $p$
and stays still with probability $(1-p)$ independently of any other particle. The happy particles  do not move.

%
%
%

\subsection*{Proof of Theorem \ref{Thm:LazyKn}(i).}

Recall that we have $\NP=(1-\delta)n$ particles moving on $K_n$ (with loops) where $\delta = \frac{p}{2}+\alpha$.

The proof of Theorem \ref{Thm:LazyKn} part (i) uses different method from the the proof of Theorem \ref{Kn}.  The distribution of unhappy particles at a given vertex affects the probability of a single unhappy particle becoming happy by remaining at this vertex,  while the other particles at the location all leave.

For a vertex $v$, let $\occV$ be the number of particles at $v$. Let $\mathcal{V}$ be the set of vertices with unhappy particles, and $E$ be the number of vertices such that $\occV=0$. 
Let $R=H+|\mathcal{V}|$ be the range of occupied vertices, i.e. the number of vertices with non-zero occupancy. Similarly to the above, at each time step, $R$ may change either positively due to a vertex that is unoccupied receiving at least one particle, or negatively due to a (necessarily unhappy) vertex losing all of its current particles. Let $R_+$ and $R_-$ represent these values respectively.

Using $(1-x)^k \le 1 - kx + k^2x^2/2$, if $k$ is a positive integer and $kx \le 1$,
we have,

\begin{eqnarray}
\label{eqn:ER+}
\E R_+ &=& E \brac{1-\brac{1-\frac{p}{n}}^U}
\; \geq \; \frac{EUp}{n}\left(1-\frac{Up}{2n}\right). \\
%
\nonumber
\E R_- &=& \sum_{v\in \mathcal{V}} p^\occV \left(1-\frac{1}{n}\right)^\occV \left(1-\frac{p}{n}\right)^{U-\occV}
\;\;  = \;\;  \left(1-\frac{p}{n}\right)^{U}\sum_{v\in \mathcal{V}} \left( \frac{p(n-1)}{n-p}\right)^\occV \\\nonumber
&\leq&  \left(1-\frac{Up}{n}+\frac{U^2p^2}{2n^2}\right)\sum_{v\in \mathcal{V}} p^\occV
\; \; \leq \; \;  \left(1-\frac{Up}{n}+\frac{U^2p^2}{2n^2}\right)\frac{U}{2}p^2\\
&=&\frac{Up^2}{2}-\frac{U^2 p^3}{2n}+\frac{U^3p^4}{4n^2}
\;\; = \;\;
\frac{Up^2}{2}\left(1 - \frac{U p}{n}\left(1- \frac{U p}{2n}\right)\right) 
\label{eqn:ER-}
\;\; \le \;\; \frac{Up^2}{2}\left(1 - \frac{U p}{2n}\right).
\end{eqnarray}

We therefore have that the expected change in $R$ satisfies,
\begin{eqnarray}\nonumber
\E \D R &=& \E [R_+  -  R_-]
\\\nonumber
&\geq &
\frac{EUp}{n}\left(1-\frac{Up}{2n}\right) - \frac{Up^2}{2}\left(1 - \frac{U p}{2n}\right)
\\\nonumber
&=& Up \left(1-\frac{Up}{2n}\right) \left(\frac{E}{n}- \frac{p}{2}\right).
%
\end{eqnarray}

We consider the case $\d = p/2 + \a $, so $E/n\geq \delta = p/2 + \a$.
We also have $(Up)/(2n) \le 1/2$, so
\begin{equation}\label{eqn:EDEltaR}
\E \D R \geq Up\a/2.
\end{equation}


We define
$D(t) \equiv \NP-R(t)$.
The change of $D$ in each time step is equal to the negative of the change in $R$ and we have $D=0$ if and only if $R=\NP$ and dispersion has occurred.
By \eqref{eqn:EDEltaR} and using $D(t)\leq U(t)$, we have that

\[
\E (D(t+1) | D(t))\leq D(t)-U(t)p\a/2 \; \le \; D(t) \left(1-p\a/2\right).
\]

Iterating our argument we have,

\[
\E D(t) \le D(0)\left(1-p\a/2\right)^t \le D(0) \exp\brac{-{t p\a/2}}.
\]
Clearly $D(0)<n$ and so choosing
\begin{equation}\label{Less}
t=\frac{ 4\log n}{p\a} =O((p\a)^{-1}\log n),
\end{equation}
gives $\E D(t)=O(1/n)$.
Hence $\TD \le t$, with probability $1- O(1/n)$ as required.

\subsection*{Proof of Theorem \ref{Thm:LazyKn} (ii).}

Recall that we have $\NP=(1-\delta)n$ particles moving on $K_n$ (with loops)
where $\delta = \frac{p}{2}-\alpha$.

The proof of Theorem \ref{Thm:LazyKn} part (ii) is similar to  that of Theorem \ref{Kn}. We measure the expected change in happy and unhappy particles. However the calculations are more involved.


Suppose $U(t)>0$. At the next step there are $X$ previously happy particles which became unhappy because unhappy particles moved and  landed on top of them. Also $Y$ previously unhappy particles became happy by either being the only particle to move to an unoccupied vertex (either one that was already unoccupied or one which contained several unhappy particles which all moved), or by being the only unhappy particle which did not move from its current location
or moved to the same location (chose to move but followed the loop edge),
while all other particles at that vertex left.
This gives $H(t+1)=H(t)-X+Y$. 


For each unhappy particle $P$,
let $\occP$ be the number of other particles at the same vertex.
For a vertex $v$, let $\occV$ be the total number of particles at $v$. Let $\mathcal{U}$ be the set of unhappy particles, $\mathcal{V}$ be the set of vertices with unhappy particles, and $E$ be the number of vertices such that $\occV=0$.
We therefore have $n=H+E+|\mathcal{V}|$.
The discussion above implies the following formulas for the expected values of $X$ and $Y$.
Inequality~\eqref{kleoa3s} would become equality, if we excluded from the inner sum the vertex occupied by the particle $P$.
\begin{eqnarray}
\label{EXLAZY}
\E X &=& H \brac{1-\brac{1-\frac{p}{n}}^U}=H-\frac{H(n-p)}{n}\left(1-\frac{p}{n} \right)^{U-1}.
\end{eqnarray}
\begin{eqnarray}
\label{kleoa3s}
\E Y & \le & \sum_{P\in \mathcal{U}}\left( E\bfrac{p}{n} \brac{1- \frac{p}{n}}^{U-1}+
 \left(1-p+\frac{p}{n}\right)p^{\occP} \left( 1-\frac{1}{n}\right)^{\occP}\left( 1-\frac{p}{n}\right)^{U-\occP-1}\right) \\\nonumber
& & +  \sum_{P\in \mathcal{U}}
\sum_{v\in \mathcal{V}}\left(p^{\occV} \left( 1-\frac{1}{n}\right)^{\occV}
\bfrac{p}{n}\left( 1-\frac{p}{n}\right)^{U-\occV-1} \right)
\\\nonumber
&=&  \frac{ UE p}{n} \brac{1- \frac{p}{n}}^{U-1}+\frac{Up}{n}\sum_{v\in \mathcal{V}}\left(p^{\occV} \left( 1-\frac{1}{n}\right)^{\occV}\left( 1-\frac{p}{n}\right)^{U-\occV-1} \right)\\\nonumber
& & + \sum_{v\in \mathcal{V}}\occV\left(\left(1-p+\frac{p}{n}\right)p^{\occV-1} \left( 1-\frac{1}{n}\right)^{\occV-1}\left( 1-\frac{p}{n}\right)^{U-\occV} \right)\\\nonumber
&=& \brac{1- \frac{p}{n}}^{U-1}\left( \frac{ UE p}{n}\right. \\
& & \;\;\;\;\;\; \left. +\sum_{v\in \mathcal{V}}\left( p\brac{\frac{n}{n-p}}\left( 1-\frac{1}{n}\right)\right)^{\occV}\left(\frac{U p}{n}  + \occV\frac{(1-p+p/n)}{p}\frac{n}{n-1}\left(1-\frac{p}{n}\right) \right)\right) \nonumber  \\
\nonumber
&=&\brac{1- \frac{p}{n}}^{U-1}\left( \frac{UE p}{n}
+\sum_{v\in \mathcal{V}}\left(\frac{p(n-1)}{n-p}\right)^{\occV}\left(\frac{U p}{n}  + \occV\frac{(1-p)}{p}\frac{\left(n-p\right)}{n-1} \right) \right) + O(1).
\\ \label{EYLAZY}
& \le & \brac{1- \frac{p}{n}}^{U-1}\left( \frac{UE p}{n}
+\left(\frac{p(n-1)}{n-p}\right)^{2}\left(\frac{U^2 p}{2n}  + U\frac{(1-p)}{p}\frac{\left(n-p\right)}{n-1} \right)\right)+O(1).
\end{eqnarray}
The last inequality above holds because $p(n-1)/(n-p) \le 1$,
$\occV \ge 2$, $\mathcal{V} \le U/2$ and $\sum_{v\in \mathcal{V}} \occV = U$.

\ignore{
\[(1-p)p^{\occP} \left( 1-\frac{1}{n}\right)^{\occP}\left( 1-\frac{p}{n}\right)^{U-\occP-1}=(1-p)\left(\frac{p(n-1)}{n-p}\right)^{\occP}\left( 1-\frac{p}{n}\right)^{U-1}. \]

Since $\frac{n-1}{n-p}<1$, this is maximised when $\occP=1$ for all unhappy particles, i.e. each unhappy particle shares the vertex it is at with one other particle, and so $\occV=2$ for each such unhappily occupied vertex. The other term in the summation

\[\sum_{v\in \mathcal{V}}\left(\frac{p(n-1)}{n-p}\right)^{\occV}\frac{U p}{n}, \]

is also clearly maximised in this case, and so we have that $\E Y$ is maximised by this case. In particular this gives us that $|\mathcal{V}|\leq U/2$ and,

\[\E Y \leq  \brac{1- \frac{p}{n}}^{U-1}\left( \frac{UE p}{n}
+\left(\frac{p(n-1)}{n-p}\right)^{2}\left(\frac{U^2 p}{2n}  + U\frac{(1-p)}{p}\frac{\left(n-p\right)}{n-1} \right)\right)+O(1).\]
}

As before, the concentration of $H'$ will follows from a martingale argument as each of the $U$ particles chooses whether and where to move independently.
A difference in whether or not to move and choice of vertex by one particle can only alter the final value of $H'$ by at most 2, so
Inequality~\eqref{martingale} applies.


Using~\eqref{EXLAZY} and~\eqref{EYLAZY}, we have the following bound on the expected value of $H'$.
\begin{flalign}\nonumber
\E [ H' \: | \: H]
&\leq  \brac{1- \frac{p}{n}}^{U-1}\left( \frac{EU p+H(n-p)}{n}  +\left(\frac{p(n-1)}{n-p}\right)^{2}\left(\frac{U^2 p}{2n}  + U\frac{(1-p)}{p}\frac{\left(n-p\right)}{n-1} \right)\right)\\\label{LAZY_EDH2}
&\leq  \brac{1- \frac{p}{n}}^{U-1}\left( \frac{EU p+H(n-p)}{n}  +\frac{U^2 p^3}{2n}  + Up(1-p) \right).
\end{flalign}

Let $\delta=\frac{p}{2}-\alpha$, so

\[\NP=\left(1-\delta\right)n=\left(1-\frac{p}{2}+\alpha\right)n.\]

We write the current value of $H$ as

\begin{equation}\label{hjqw}
H= \NP-\varepsilon n=\left(1-\delta-\varepsilon\right)n,
\end{equation}

where $0 < \varepsilon \le 1-\d$. Thus

\begin{equation}\label{hkq2}
U=\varepsilon n \;\; \mbox{ and } \;\; E\leq \left(\delta+\varepsilon\right)n.
\end{equation}

We use \eqref{hjqw} and~\eqref{hkq2} in \eqref{LAZY_EDH2}.

\begin{align}\nonumber
\lefteqn{\E [ H' \: | \: H]} & & & \\\nonumber
&\leq n\brac{1- \frac{(U-1)p}{n}+\frac{U^2p^2}{2n^2}}\left( \frac{EU p+H(n-p)}{n^2}  +\frac{U^2 p^3}{2n^2}  + \frac{Up(1-p)}{n} \right) \\\nonumber
&\leq  n\left(1- \varepsilon p +\frac{p}{n}+\frac{\varepsilon^2p^2}{2}\right)\left(  \left(\delta+\varepsilon\right)\varepsilon p+\left(1-\delta-\varepsilon\right)(1-p/n) +\frac{\varepsilon^2 p^3}{2}  + \varepsilon p(1-p) \right) \\\nonumber
&=  n\left(1+\frac{p}{n}- \varepsilon p\left( 1- \frac{\varepsilon p}{2}\right)\right)\left(\left(1+\delta +\varepsilon \right)  \varepsilon p+1-\delta-\varepsilon +O\left(\frac{p}{n}\right)+\frac{\varepsilon^2 p^3}{2}  - \varepsilon p^2 \right) \\\nonumber
&\leq   n\left(1- \varepsilon p\left( 1- \frac{\varepsilon p}{2}\right)\right)\left( 1-\delta-\varepsilon +\varepsilon p\left( 1+\delta +\varepsilon+ \frac{\varepsilon p^2}{2}-p\right) \right) +O(1) \\\nonumber
&=  n\left(1-\delta -\varepsilon \right. \\\nonumber
& \phantom{= n(} \left. + \varepsilon p\left( 1+\delta +\varepsilon+ \frac{\varepsilon p^2}{2}-p -\left( 1- \frac{\varepsilon p}{2}\right)\left(1-\delta -\varepsilon +\varepsilon p\left( 1+\delta +\varepsilon+ \frac{\varepsilon p^2}{2}-p\right) \right)\right)\right) +O(1)
\\\label{lsio1d}
&=  n\left(1-\delta -\varepsilon + \varepsilon pA\right) +O(1)
\end{align}

We now bound $A$:

\begin{eqnarray}\nonumber
A &=&
1+\delta +\varepsilon+ \frac{\varepsilon p^2}{2}-p -\left( 1- \frac{\varepsilon p}{2}\right)\left(1-\delta -\varepsilon +\varepsilon p\left( 1+\delta +\varepsilon+ \frac{\varepsilon p^2}{2}-p\right) \right)
\\\nonumber
&=&
1+\delta +\varepsilon+ \frac{\varepsilon p^2}{2}-p
- 1+\delta +\varepsilon -\varepsilon p\left( 1+\delta +\varepsilon+ \frac{\varepsilon p^2}{2}-p\right)
\\\nonumber
&&+ \frac{\varepsilon p}{2}
\left(1-\delta -\varepsilon +\varepsilon p\left( 1+\delta +\varepsilon+ \frac{\varepsilon p^2}{2}-p\right) \right)
\\\nonumber
&=&
2\delta +2\varepsilon-p
+\varepsilon p\left( \frac{p}{2} + \frac{1}{2} - \frac{\d}{2} - \frac{\varepsilon}{2}
+ \frac{\varepsilon p}{2}\left( 1+\delta +\varepsilon+ \frac{\varepsilon p^2}{2}-p\right)
- 1-\delta -\varepsilon- \frac{\varepsilon p^2}{2} +p \right)
\\\label{jkkes}
&\le&
2\varepsilon- 2\a
+\varepsilon p\left( \frac{p}{2} - \frac{1}{2} - \frac{3}{2}\d - \frac{3}{2}\e
+ \varepsilon p  - \frac{\varepsilon p^2}{2} +p \right)
\\\nonumber
&\le &
2\varepsilon- 2\a
+\varepsilon p\left( p - \left(\frac{1}{2}-\frac{p}{2}\right) - \frac{3}{2}\d - \e\left(\frac{3}{2} - p\right) - \frac{\varepsilon p^2}{2}\right)
\\\label{eml2q}
&\le &
2\varepsilon- 2\a +2\varepsilon p.
\end{eqnarray}

For Inequality~\eqref{jkkes}, use $\d = (p/2) -\a$ and observe that $1+\d +\varepsilon + (\varepsilon p^2)/2 -p \le 2$.
Combining~\eqref{lsio1d} and~\eqref{eml2q}, we get

\begin{eqnarray}\nonumber
\E [ H' \: | \: H]
&\le & n\left(1-\delta -\varepsilon + \varepsilon pA\right) +O(1)
\\\label{ld90a2}
& \le & n\left(1-\d -\varepsilon + 2\varepsilon p(\varepsilon(1+p) - \a)\right) +O(1).
\end{eqnarray}

This implies that for $\om(1/n) = \e \le \a/(3(1+p)) := \e_0$, we have $2(\varepsilon(1+p) - \a)\leq -4\a/3\leq-\a$ and so,

\begin{equation}\nonumber
\E [ H' \: | \: H] \le n\left(1-\d -\varepsilon - \e p\a\right),
\end{equation}

so from~\eqref{martingale},

\begin{equation}\label{qjks9v}
 \Pr\left( H' \le n(1-\d-\e) \right) \ge 1 -  e^{-\Om(n\e p^2\a^2)}.
\end{equation}

In the range $\e_0 \le \e \le 1 -\d$,
the bound in~\eqref{ld90a2} does not seem strong enough to separate $H'$ from $n(1-\d)$,
but the following bound on $U':= U(t+1)$, which holds for any $0 \le \e \le 1-\d$, will help.

\begin{eqnarray}
\label{jk334l}
\E [U' \: | \: U] & \ge & U\left( 1 - \left(1 - \frac{p}{n}\right)^{U-1}\right)
\\\nonumber
& \ge & U\left( 1 - \left(1 - \frac{(U-1)p}{n} + \frac{U^2p^2}{2n^2}\right)\right)
\\\label{hjkk23n}
& = & U\left( 1 - \left(1 - \frac{Up}{n} + \frac{U^2p^2}{2n^2}\right)\right) +O(1)
\;\; \ge \;\; \frac{U^2p}{3n} \;\; = \;\;  \frac{n\e^2p}{3}.
\end{eqnarray}
Inequality~\eqref{jk334l} holds because the probability that a given unhappy particle $P$
remains unhappy is at least the probability that at least one other unhappy particle decides to move
to the vertex chosen by $P$.

The bound~\eqref{hjkk23n} implies 
\begin{equation}
\E [ H' \: | \: H] \le n\left(1-\d -\frac{\e^2p}{3}\right),
\end{equation}
so
\begin{equation}\label{jkc56a}
 \Pr\left( H' \le n\left(1-\d-\frac{\e^2p}{4}\right) \right) \ge 1 -  e^{-\Om(n\e^3 p^2)}.
\end{equation}

We use bounds~\eqref{qjks9v} and~\eqref{jkc56a} to conclude the proof.
Assume that $\ep \ge \e_0^2p/4$, that is, assume that
$H \le n(1-\d - \e_0^2p/4)$.
Then for some constant $c$,
\[ \Pr \left( H' \le n\left(1-\d - \frac{\e_0^2p}{4}\right) \right) \ge  1 -  e^{- 2 c n p^2\a^3}.
\]
The above inequality follows from~\eqref{qjks9v} for the case when $\e_0^2p/4 \le  \e \le \e_0$,
and from~\eqref{jkc56a} for the case $\e_0 \le \e \le 1-\d$.
Thus the probability that
$H$ reaches $n(1-\d) = M$ and dispersion completes within $e^{c n p^2\a^3}$ steps is at most $e^{- c n p^2\a^3}$.


\ignore{
This is negative for $(1+p)\varepsilon<\alpha$, otherwise we have, using that $\varepsilon<1-p/2+\alpha$;

\[ \leq  2n\varepsilon p\left(  (1+p)(1-p/2-\alpha)-2\alpha \right)\leq 2n\varepsilon p(1+p))\]

\[\E H(t+1)\leq \NP-\varepsilon n(1-2p\left(1+p \right)) ,\]

this is monotone decreasing in $\varepsilon$ and so subject to the above constraint, maximised for $(1+p)\varepsilon=\alpha$, giving

\[ \E H(t+1)\leq \NP-n\alpha\left(\frac{1}{ 1+p}-2p\right)\leq \NP-\frac{3\alpha n}{10},\]

for $p<1/4$. Since $3/10>1/4$, we have, using our earlier discussion of the concentration of $X$ and $Y$,

\[\Pr\left(H(t+1)>\NP-\frac{\alpha n}{4}\right) \leq e^{-2c_1 n }\]
for a constant $c_1=c_1(\alpha, p)>0$.

We now consider the case $0<\varepsilon<\alpha/2$, giving

\begin{flalign}\nonumber
\E \D H &\leq  \brac{1- \frac{p}{n}}^{U-1}H\left( \frac{EU p}{Hn}+1-\frac{p}{n}  +\frac{U^2 p^3}{2Hn}  + \frac{Up(1-p)}{H} \right)-H\\\nonumber
&\leq \brac{1- \frac{p}{n}}^{U-1}H\left(1-\frac{p}{n}  + \frac{\left(\delta+\varepsilon\right)\varepsilon p}{1-\delta-\varepsilon}+\frac{\varepsilon^2 p^3}{2\left(1-\delta-\varepsilon\right)}  + \frac{\varepsilon p(1-p)}{1-\delta-\varepsilon} \right)-H\\\nonumber
&= \brac{1- \frac{p}{n}}^{U-1}H\left(1-\frac{p}{n}  +\frac{\varepsilon}{1-\delta-\varepsilon} \left( \left(\delta+\varepsilon\right) p+\frac{\varepsilon p^3}{2}  +  p(1-p) \right)\right)-H\\\nonumber
&\leq H\left(\brac{1- \frac{(U-1)p}{n}+\frac{U^2p^2}{2n^2}}\left(1-\frac{p}{n}  +\frac{\varepsilon p}{1-\delta-\varepsilon} \left( \delta+\varepsilon+\frac{\varepsilon p^2}{2}  +  1-p \right)\right)-1\right)\\\nonumber
&\leq H\left(\left(1- \varepsilon p +\frac{p}{n}+\frac{\varepsilon^2p^2}{2}\right)\left(1-\frac{p}{n}  +\frac{\varepsilon p}{1-\delta-\varepsilon} \left(1-\delta -2\alpha +\varepsilon+\frac{\varepsilon p^2}{2}   \right)\right)-1\right)\\\nonumber
&= H\left(\left(1+\frac{p}{n}- \varepsilon p\left( 1- \frac{\varepsilon p}{2}\right)\right)\left(1-\frac{p}{n}  +\varepsilon p \left(1-\frac{2\alpha -2\varepsilon-\frac{\varepsilon p^2}{2}}{1-\delta-\varepsilon} \right)  \right)-1\right)\\\nonumber
&\leq  H\left(\left(1+\frac{p}{n}- \varepsilon p\left( 1- \frac{\varepsilon p}{2}\right)\right)\left(1-\frac{p}{n}  +\varepsilon p \left(1-2\varepsilon\left(1-\frac{ p^2}{4}\right) \right)  \right)-1\right)\\\nonumber
&\leq  H\left(\left(1+\frac{p}{n}- \varepsilon p\left( 1- \frac{\varepsilon p}{2}\right)\right)\left(1-\frac{p}{n}  +\varepsilon p \left(1-\frac{3\varepsilon}{2} \right)  \right)-1\right)\\\nonumber
&=  H\left(1-1-\varepsilon p\left(1- \frac{\varepsilon p}{2} -1+\frac{3\varepsilon}{2} +\varepsilon p\left(1-\frac{\varepsilon p}{2} \right)\left(1-\frac{3\varepsilon}{2} \right)\right)+O(pn^{-1}) \right)\\
&\leq H\left(-\varepsilon^2 p \left( \frac{3-p}{2}\right) \right).
\end{flalign}

Therefore, we have

\[\E H(t+1) \leq H(t)\left(1-\varepsilon^2 p \left( \frac{3-p}{2}\right)\right), \]

and in particular, that if $H(t)\leq \NP-\alpha n/4$, then $\varepsilon>\alpha /4$ and so

\[\E H(t+1)\leq H(t)\left(1-(\alpha /4)^2 p \left( \frac{3-p}{2}\right)\right), \]

By our discussion of the concentration of $X$ and $Y$, we have that

\[\Pr\left(H(t+1)>H(t)\left(1-(\alpha /4)^2 p \left( \frac{3-p}{4}\right)\right) \right) \leq e^{-2c_2n }\]
for a constant $c_2=c_2(\alpha, p)>0$. This implies that given $H(t)\leq \NP-\alpha n/4$,
\[\Pr\left(H(t+1)>\NP-\frac{\alpha n}{4}\right) \leq e^{-2c_2 n }.\]

Since in both cases we have that the probability of $H(t+1)$ exceeding $\NP-\alpha n/4$ is smaller than $e^{-2c n }$ for some constant $c$, we have by a union bound, that with probability at most $e^{-c n }$, $H$ does not reach $\NP$ in the first $e^{c n }$ steps as required.
}

\section{Dispersion on $k$-regular trees}\label{SkReg}


Let $G$ be a (sufficiently large) $k$-regular complete rooted tree, with the initial position of the particles  at the root.
The root has $k$ children, and each other (internal) vertex has $k-1$ children.
For the process to end, there must be particles that have reached distance at least $\log_{k-1} \NumParticles$,
since there are less than $\NumParticles$ vertices within distance $\log_{k-1} \NumParticles$ from the root.
 We prove the following stronger lower and upper bound, which proves Theorem \ref{kReg}, with

\[\alpha_k=1-\frac{1}{2(\log_{k-1}k) -1} \mbox{ and } \beta_k=\frac{1}{3}-\frac{1}{3\log_{k-1}k} \]

For small values of $k$ there is a gap in the lower and upper bounds, for example,  $\alpha_3\approx 0.54$ and $\beta_3\approx 0.12$, $\alpha_4\approx 0.34$ and $\beta_4\approx 0.07$, but as $k$ increases, these both tend to $0$, making the bounds asymptotically tight.

\begin{theorem}
For all $\varepsilon>0$, on a $k$-regular tree, with $\NumParticles$ particles initially placed at a single vertex,
with probability $1 - O(\NP^{-\ep})$, when the dispersion process terminates, the maximum distance $\DD$ any particle
is from the origin is bounded by

\[ \left(1+\frac{1}{2(\log_{k-1}k) -1} -\varepsilon\right)\log_{k-1} \NumParticles\le \DD \le \left(\frac{5}{3}+\frac{1}{3\log_{k-1}k}+2\varepsilon \right)\log_{k-1}. \NumParticles\]

\end{theorem}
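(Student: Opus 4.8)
The plan is to prove the lower and upper bounds separately, using infinite $k$-regular trees throughout and controlling the behaviour of the process via the predetermined random walks attached to each particle.

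For the \textbf{lower bound}, I would argue that in order for the process to terminate, many particles must have been forced past a certain depth, and to be forced past depth $d$ a particle must at some point share a vertex at distance $d$ with another particle. More precisely, consider the first time each particle reaches distance $d$; call the vertex it lands on its \emph{$d$-entry vertex}. Since the subtree structure of a $k$-regular tree is homogeneous and the particles' walks are independent, conditioned on the history up to the relevant time the $d$-entry vertex of a particle that does reach depth $d$ is (close to) uniform among the roughly $k(k-1)^{d-1}$ vertices at that distance. If $d$ is slightly below $2\log_{k-1}\NP$ — to be precise, if $(k-1)^d \ll \NP^2$ so there are far fewer than $\binom{\NP}{2}$-safe many vertices — a birthday-paradox argument shows it is very likely that two particles collide at the same $d$-entry vertex; but the point is the opposite. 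We want: if $d$ is \emph{large enough} that $(k-1)^d$ substantially exceeds $\NP^2$, then \whp\ \emph{no} two particles have the same $d$-entry vertex, so no particle is ever pushed beyond depth $d$ — except that the process still has to place $\NP$ distinct particles, which forces the termination depth to be at least the value of $d$ at which collisions stop being avoidable, namely $d \approx \bigl(1 + \frac{1}{2(\log_{k-1}k)-1}\bigr)\log_{k-1}\NP$. The extra term beyond $\log_{k-1}\NP$ comes from carefully accounting that the number of vertices at distance exactly $d$ grows like $(k-1)^d$ rather than $k^d$, i.e.\ from the gap between $\log_{k-1}$ and $\log_k$; the $\varepsilon$ absorbs lower-order slack and the failure probability $O(\NP^{-\ep})$ comes from a union bound over the $\binom{\NP}{2}$ pairs against the collision probability $\approx \NP^2/(k-1)^d$.

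For the \textbf{upper bound}, I would bound the number of walk steps any single particle can make before dispersion terminates, and show this is $O(\log_{k-1}\NP)$ \whp; then the observation in the introduction ($\sum_i N_i(t)\ge 2t$, so $\TD \le \NP x/2$ if no particle makes more than $x$ walk steps) gives $\TD = O(\NP \log_{k-1}\NP)$, and the distance bound follows because in $x$ walk steps a particle moves at most distance $x$ from the origin. The mechanism: a particle only keeps moving while it repeatedly collides. Track a particle by its walk; each walk step it either moves toward the root or into one of $k-1$ child-subtrees, and on a tree a random walk has a strong outward drift, so after $\Theta(d)$ walk steps it is at depth $\Theta(d)$ with room to spare, and the set of vertices it could be sitting at is exponentially large in the number of steps. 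For it to be unhappy at depth $d$, some other particle must be at the same vertex; bounding, over all $\NP$ particles and over a polynomial number of candidate times, the probability that two independent outward-drifting walks coincide at a deep vertex, one sees that once $d$ exceeds roughly $\bigl(\frac{5}{3}+\frac{1}{3\log_{k-1}k}\bigr)\log_{k-1}\NP$ this cannot happen \whp. The constants $5/3$ and the $\frac13$-factor on $\frac{1}{\log_{k-1}k}$ arise from the quantitative local-limit / return estimates for the biased walk on the tree (the probability a walk is at a particular vertex of depth $d$ at a good time decays like $(k-1)^{-d}$ up to the speed of the walk), combined with the budget $\NP$ of particles and the overhead of a union bound over time.

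The \textbf{main obstacle} is the coupling between the walk steps and the process steps: a particle's walk "stops and starts", and a temporarily happy particle does not know it will move again, so one cannot directly say "particle $i$ has taken $\approx t$ walk steps by process step $t$." The device of predetermining each particle's infinite walk handles independence, but one still must argue that the \emph{number} of walk steps used stays $O(\log_{k-1}\NP)$, which requires showing the process itself cannot repeatedly re-activate a particle many times — equivalently, that deep collisions are rare — and this is exactly where the birthday-type estimate on independent outward walks must be made uniform over all particles and all (polynomially many) relevant process steps. Getting the constants sharp enough that $\alpha_k,\beta_k\to 0$ is the delicate part; a crude collision bound would only give $\DD = O(\log_{k-1}\NP)$ with a large constant, and removing the slack to reach $2\pm o(1)$ is what "takes some work," as the introduction warns.
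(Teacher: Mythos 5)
Your weak upper bound (no particle passes depth $(2+\varepsilon)\log_{k-1}\NP$ because two independent walks whp never share a vertex that deep) matches the paper's warm-up argument, but both of your main bounds have genuine gaps. For the lower bound, your mechanism does not produce the claimed constant and, as stated, does not produce a lower bound at all: showing that no two particles share a $d$-entry vertex once $(k-1)^d \gg \NP^2$ is an \emph{upper}-bound statement, and the only lower bound you extract from ``the process must place $\NP$ particles'' is the trivial $\DD \ge \log_{k-1}\NP$. A birthday estimate on first-entry vertices would place the collision threshold at $d \approx 2\log_{k-1}\NP$, not at $\bigl(1+\frac{1}{2\log_{k-1}k-1}\bigr)\log_{k-1}\NP$, so the constant you name is not explained by the mechanism you propose. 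The paper's actual argument is different in kind: by counting vertices, $\NP-o(\NP)$ particles must cross depth $(1-\varepsilon/2)\log_{k-1}\NP$ and hence collide there in pairs; a colliding pair then descends a further $d'$ levels \emph{together} with probability about $\bigl(\frac{k-1}{k^2}\bigr)^{d'} = (k-1)^{(1-2\log_{k-1}k)d'}$ per pair (both particles must choose the same child at every step), and with $\Theta(\NP)$ independent pair-trials some pair whp descends an extra $\frac{1-\varepsilon/2}{2\log_{k-1}k-1}\log_{k-1}\NP$. The exponent $2\log_{k-1}k-1$ — the source of $\alpha_k$ — comes from this travel-together probability, not from the gap between $\log_{k-1}$ and $\log_k$ in the vertex count. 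One also needs the paper's auxiliary lemma that no vertex at depth $\ge\frac{3+\varepsilon}{2}\log_{k-1}\NP$ is ever visited by three particles, so that once the lucky pair separates it is frozen.

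For the sharpened upper bound, attributing the constants $\frac53$ and $\frac{1}{3\log_{k-1}k}$ to ``local-limit / return estimates'' does not identify the argument, and your plan to first bound walk steps by $O(\log_{k-1}\NP)$ and deduce the distance runs in the wrong direction: the paper derives the walk-step (hence $\TD$) bound \emph{from} the distance bound. The actual structure is: at depth $\frac53\log_{k-1}\NP$ there are $\NP^{5/3+\varepsilon}$ vertices, so a balls-in-bins count shows only $\Theta(\NP^{1/3-\varepsilon})$ pairs of particles share a first-entry vertex there; no such pair can climb back up the $\frac16\log_{k-1}\NP$ levels to depth $\frac{3+\varepsilon}{2}\log_{k-1}\NP$ where a third particle could re-activate them (probability $\NP^{-1/3}$ per pair, union over $\NP^{1/3-\varepsilon}$ pairs); and each pair descends at most $(\frac13+\varepsilon)\log_k\NP$ further before separating. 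The decomposition $\frac53 = \frac32 + \frac16$, balancing the no-three-particles depth against the return distance, is the key bookkeeping your proposal is missing.
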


\begin{proof}
As previously mentioned, we predetermine the movements  the particles  take when they move.
An important consequence of this,  is that while a particle may not visit every vertex of its random walk before the dispersion process ends, if the predetermined random walk never visits a particular vertex, then regardless of the behavior of the other particles and the general dispersion process, that particle will never visit that  vertex.

For a particle to reach a distance greater than $d$, it must first reach a vertex at distance $d$. To move on from this vertex, another particle must at some point also be at this vertex. We demonstrate that even if a particle reaches a vertex at depth of $d=(2+\varepsilon)\log_{k-1} \NumParticles$, the probability that any other particle visits this same vertex tends to $0$, hence no particle reaches a depth of $d+1$. This is a weaker upper bound than the one we will prove but is useful for the following observation.

We will make use of a number of results that apply to infinite trees but they can also be made applicable to sufficiently large finite trees. The walks of a particle on the infinite $k$-regular tree and a finite $k$-regular tree (i.e. one that is $k$-regular until terminating in a level of leaves at some depth), are identically distributed until the walk reaches a leaf vertex. Since we will show that in the infinite graph, no two particles ever visit the same vertex at depth $d=(2+\varepsilon)\log_{k-1} \NumParticles$, we can say that in the finite case no two vertices will visit a given vertex at depth $d$ before at least one of them has reached a leaf node. Assuming that the leaf layer is at depth greater than $d$, then this property must still hold in the finite case as no particle can advance beyond $d$ to reach a leaf node. This tells us that any particle will stop walking in the dispersion process before ever reaching a leaf node and so it cannot reach any vertex in the finite case that it would not have reached in the infinite tree. We therefore assume from here onwards that we are working in the infinite tree. Since we will demonstrate a better upper bound on the distance a particle can travel, the same argument shows that these results hold as long as the depth of the leaves is greater than our upper bound on $\DD$.


Consider the predetermined random walk for a single particle. What is the probability that it ever reaches a particular vertex $v$, at depth $d$ from the root?
Let $X_t$ be the current distance of the particle from $v$ at time $t$, so $X_0=d$. With probability $p=\frac{k-1}{k}$, the particle will move away from $v$ and as such $X_t$ will increase, conversely, with probability $q=1-p=\frac{1}{k}$, the particle will move towards $v$, decreasing $X_t$. The properties of random walks with bias are given in
Feller Chapter XIV, \cite{Feller_Vol1_2Edition}. By equation (3.6) of that chapter, the probability of reaching $v$ (ultimate ruin) starting from distance $d$ is

\begin{equation}\label{Eqn:Probabilitybound}
	\Pr(X_t=0, \mbox{ for some } t)
=\left(\frac{q}{p}\right)^d=\left(\frac{1}{k-1}\right)^d.
\end{equation}

\ignore{
Fix $h>0$. We will first calculate the probability that the particle will reach $v$ before it first reaches some vertex at distance $h$ from $v$. In other words, the probability that the event $X_t=0$ occurs for some $t$ such that for all $t'\leq t$, we have that $X_t<h$.

Let $P_{h,i}$ be the probability above, given that $X_0=i$. Formally, if $t_h$ is the first time that the $X_t$ is either $0$ or $h$, then

\[P_{h,i}:=\Pr (X_{t_h}=0|X_0=i). \]

This depends only on the values of $h$ and $i$. Consider the recurrence relation that determines $P_{h,i}$. With probability $p=\frac{k-1}{k}$, the particle will move away from $v$ and as such $X_t$ will increase, conversely, with probability $q=1-p=\frac{1}{k}$, the particle will move towards $v$, decreasing $X_t$.

This gives us,

\[P_{h,i}=qP_{h,i-1}+pP_{h,i+1},\]

with boundary conditions, $P_{h,h}=0$ and $P_{h,0}=1$. Solving this recurrence, we get

\[P_{h,i}=a+b\left(\frac{q}{p}\right)^i. \]

Substituting our boundary conditions, we see that we have $a+b=1$ and $a+b\left(\frac{q}{p}\right)^h=0$. Rearranging and substituting we get

\begin{align*}
0=&(1-b)+b\left(\frac{q}{p}\right)^h\\
0=&b\left(\left(\frac{q}{p}\right)^h-1\right)+1\\
b=&\frac{1}{1-\left(\frac{q}{p}\right)^h}\\
b=&\frac{p^h}{p^h-q^h}.
\end{align*}

Which immediately yields that
\[a=\frac{-q^h}{p^h-q^h},\]

and so we have

\[P_{h,i}=\frac{-q^h}{p^h-q^h}+\frac{p^h}{p^h-q^h}\left(\frac{q}{p}\right)^i. \]

We are now able to calculate the probability that the particle ever visits $v$, starting from the root. Recall that $v$ is at depth $t$, and so we simply have to take the limit of $P_{h,d}$ as $h\rightarrow \infty$. Since $k\geq3$, we have that $p>q$. Therefore
\begin{equation}\label{Eqn:Probabilitybound}
	\Pr(X_t=0, \mbox{ for some } t)=\lim_{h\rightarrow \infty} P_{h,d}=0+1\times \left(\frac{q}{p}\right)^d=\left(\frac{1}{k-1}\right)^d.
\end{equation}
}

Consider a single particle and let $v$ be the first (if any) vertex it visits at depth  $d=(2+\varepsilon)\log_{k-1} \NumParticles$ during the dispersal process. Using, \eqref{Eqn:Probabilitybound}, the probability that another particle reaches $v$ satisfies

\[\left(\frac{1}{k-1}\right)^d=\left(\frac{1}{k-1}\right)^{(2+\varepsilon)\log_{k-1} \NumParticles}=\frac{1}{\NumParticles^{2+\varepsilon}}. \]

Taking a union bound, we see that the probability that a particle reaches depth $d$ for the first time at a position that another vertex may visit at any point in its random walk is less than $\binom{\NumParticles}{2} \frac{1}{\NumParticles^{2+\varepsilon}}<\NumParticles^{-\varepsilon}\rightarrow 0$ as $\NumParticles\rightarrow \infty$ as required.

For the lower bound, we require the following lemma.

\begin{lemma}\label{Lemma:No3Particles}
For all $\varepsilon>0$, and a given $d\geq \frac{3+\varepsilon}{2} \log_{k-1} \NumParticles$,
with probability $1-O(\NP^{-\ep})$ no vertex at depth $d$ is ever visited by more than $2$ distinct particles.
\end{lemma}
\begin{proof} We again make use of \eqref{Eqn:Probabilitybound} and the fact that there are $k(k-1)^{d-1}$ vertices at depth exactly $d$, to observe that the probability that any three particles ever visit a common (but unspecified) vertex at depth $d$ is less than

\[\binom{\NumParticles}{3} \left(\frac{1}{k-1}\right)^{3d}  k(k-1)^{d-1}\leq \frac{k}{k-1}\NumParticles^3 (k-1)^{-2d}.\]

For $d\geq\frac{3+\varepsilon}{2} \log_{k-1} \NumParticles$,
the right-hand side of the above inequality is $= O(\NP^{-\ep})$.
\end{proof}

The number of vertices at depth $d$ or less is equal to $(k(k-1)^d-2)/(k-2)$.
For $d=(1-\varepsilon/2)\log_{k-1} \NumParticles$, this is equal to $\frac{k\NumParticles^{1-\varepsilon/2}-2}{k-2}=o(\NumParticles)$. Therefore at least $\NumParticles-o(\NumParticles)$ particles must at some point before the process ends, reach depth greater than $(1-\varepsilon/2)\log_{k-1} \NumParticles$.
Each such particle must at some time, be at depth $d$ with at least one other particle at the same vertex.

We will pair up these particles and consider the probability that they continue together to a further depth. The first time two particles are at depth $d$ at the same vertex, we take them as the first pair (choosing arbitrarily if there are multiple choices).

Given this configuration, we calculate the probability that these two particles now advance a further $d'-1=(A\log_{k-1} \NumParticles)-1$ steps together without separating and then separate on the $d'$th step.  The probability of this event is equal to

\begin{align}
\left(\frac{k-1}{k}\times\frac{1}{k}\right)^{d'-1}\left( \frac{k-1}{k}\right)=&k\left(\frac{k-1}{k^2}\right)^{(A\log_{k-1} \NumParticles)}\nonumber\\
=& k\left(k-1\right)^{(1-2\log_{k-1}k)A\log_{k-1} \NumParticles}\nonumber\\
=&k\NumParticles^{-((2\log_{k-1}k)-1)A}.\label{Eqn:ProbOfAdvancing}
\end{align}

If this event occurs, then both particles reach a vertex at depth $d+d'-1$ and at least one advances to a depth of $d+d'$ before they separate. We can determine whether this event happens by looking only at the pre-determined walks for these particles until they successfully advance and separate or the first time at which they fail to do so. This ensures any future behaviour of the pre-determined walk of the particles is still random and the above is independent of the behaviour of any other particle.

If the event does not occur, then we consider the next pair of particles. We choose them to be the two particles that will next be at a common vertex at depth $d$. This may occur at the same time as the previous pair. As we only examine the predetermined walks for the pair of particles each time, the probability of these two particles advancing is independent of and hence equal to that calculated in \eqref{Eqn:ProbOfAdvancing}. If there were three particles at this vertex, then we simply ignore the third particle and move on to the next pair (equally for any odd number, we pair the particles up and discard the final particle).

If the next pair reaches $d$ at a later time, it is possible that one (or both) of the vertices had already been included in a previous pair that failed to advance and returned to depth $d$. The probability that this pair advances is still independent, as once a particle had returned to depth $d$, it must have failed to advance and as such we stopped examining the next steps in its predetermined walk and have preserved the randomness of the particles movement from this point on.

Each pair that we consider, therefore uses up at most $3$ of the $\NumParticles-o(\NumParticles)$ particles that must at some time be at depth $d$ at the same time as another particle, and so we have at least $\NumParticles/3-o(\NumParticles)\geq \NumParticles/4$ such pairs to consider, where the probability that each pair advances before separating is independent.
Using \eqref{Eqn:ProbOfAdvancing}, setting $A=\frac{1-\varepsilon/2}{2(\log_{k-1}k) -1}$ we have the probability that none of these pairs advance is less than,

\[\left(1-k\NumParticles^{-((2\log_{k-1}k)-1)A}\right)^{\NumParticles/4}\leq e^{-\frac{k\NumParticles^{1-((2\log_{k-1}k)-1)A}}{4} }=e^{-\frac{k\NumParticles^{\varepsilon/2}}{4} }. \]

Therefore, with high probability, at least one of these pairs advances to a total depth of
\[d+d'=(1-\varepsilon/2+\frac{1-\varepsilon/2}{2(\log_{k-1}k) -1})\log_{k-1} \NumParticles\geq (1-\varepsilon+\frac{1}{2(\log_{k-1}k) -1})\log_{k-1} \NumParticles,\]

(since $2(\log_{k-1}k)-1>1$), and then separates, leaving the two particles on separate vertices. Since they are on different vertices, for either of the particles to move again, a third particle would have to visit the location of that particle. This implies that three distinct particles would have to visit the vertex prior to the one on which the pair separated,
and by Lemma \ref{Lemma:No3Particles}, this can occur only with probability $O(\NP^{-\ep})$.

Therefore, with probability $1 - O(\NP^{-\ep})$, at least one vertex ends the process at depth
\[(1-\varepsilon+\frac{1}{2(\log_{k-1}k) -1})\log_{k-1} \NumParticles.\]

We now return to prove the stronger corresponding upper bound. Consider vertices at depth $d\geq(\frac{3}{2}+\frac{1}{6}+\varepsilon) \log_{k-1} \NumParticles=(\frac{5}{3}+\varepsilon) \log_{k-1} \NumParticles$. No three particles have any vertex at this depth in all three of their pre-determined paths. With probability $1$ all particles will eventually reach this depth in their infinite walk and so each particle has a corresponding vertex that it first reaches at this depth at some point in their pre-determined walk. Given that at most two particles can reach such a vertex, for a particle to reach this depth in the dispersion process and continue moving, exactly one other particle must also have this vertex in its path. If this vertex is not the first at depth $d$ in the second particle's path, then to reach this point it would have to have reached some other vertex at $d$ and then returned, requiring another particle to have been at this second vertex and for both particles to move together without separating, at least until they returned to depth $\frac{3+\varepsilon}{2} \log_{k-1} \NumParticles$ where they might encounter another particle. If we can demonstrate that no pairs of particles can return from this depth, then we only need consider particles that share a vertex as their first visit to this depth.

Since all particles reach exactly one vertex at this depth, independently over both vertices and particles, and uniformly  at random, the number of vertices where two particles first reach depth $d$ in their walks is distributed exactly as the number of collisions in a balls in bins problem, with $\NumParticles$ balls going into $k(k-1)^d$ bins. (Noting that all collisions are of exactly $2$ particles, since no three have common vertices in their paths at this depth.)

The total number of vertices at this depth is $\Theta \left( \NumParticles^{\frac{5}{3}+\varepsilon}\right)$. Examining each particle one by one, we see there at most $\NumParticles$ vertices already claimed by a particle and so the probability that it produces a collision is at most $\Theta \left(\NumParticles/\NumParticles^{\frac{5}{3}+\varepsilon}\right)=\Theta \left(\NumParticles^{-\left(\frac{2}{3}+\varepsilon\right)}\right)$, and this bound holds independently of the outcome of the other trials. Therefore we can bound the number of collisions by a sum of independent random indicator variables each with expectation $\Theta\left(\NumParticles^{-\left(\frac{2}{3}+\varepsilon\right)}\right)$. Let $X$ be the value of this sum of $\NumParticles$ variables, and so we have

\[\E(X)=\Theta\left(\NumParticles\times \NumParticles^{-\left(\frac{2}{3}+\varepsilon\right)}\right)=\Theta\left( \NumParticles^{\frac{1}{3}-\varepsilon}\right).\]
By Hoeffding's inequality,

\[\Pr\left[X>(1+\varepsilon)E(X) \right]\leq e^{-2\NumParticles \varepsilon^2 E(X)^2}. \]

Thus with high probability we have that the expected number of collisions is of the order $\Theta\left(\NumParticles^{\frac{1}{3}-\varepsilon}\right)$.

Even assuming that each of these collisions occurs during dispersion (i.e. both particles reach each of these vertices in the process), for a given pair of two particles to return from depth $d$ to $\frac{3+\varepsilon}{2} \log_{k-1} \NumParticles$, they would have to move together a distance upwards of at least $\frac{1}{6}\log_{k-1} \NumParticles$ without separating. In particular both particles would have to contain a common ancestor vertex at some later time in their pre-determined paths at distance $\frac{1}{6}\log_{k-1} \NumParticles$ above the vertex at which they meet. The movements of the particles after the point at which they first meet at this depth are independent random walks and so by \eqref{Eqn:Probabilitybound}, the probability that one of the particles visits the ancestor vertex at any point in it's infinite walk after this point is equal to

\[(k-1)^{-\frac{1}{6}\log_{k-1} \NumParticles}=\NumParticles^{-\frac{1}{6}}.\]

Therefore, the probability that both particles visit this ancestor is $\NumParticles^{-\frac{1}{3}}$, and by taking a union bound over the $\NumParticles^{-\frac{1}{3}-\varepsilon}$ pairs, we see that with probability $1 - O(\NP^{-\ep})$
no such pair can return to this depth.

We claim that no such pair of particles will move more than

\[\left(\frac{1+\varepsilon}{3\log_{k-1} k}\right)\log_{k-1} \NumParticles=\left(\frac{1}{3}+\varepsilon\right)\log_{k} \NumParticles,\]

steps before separating. Note this is larger than the distance required to return to the earlier depth, but with high probability most of these steps will be in the wrong direction, advancing further down the tree. The probability that two particle move distance $c \log_{k-1} \NumParticles$ without separating is equal to $k^{-c \log_{k-1} \NumParticles}=\NumParticles^{-c\log_{k-1} k}$. Taking a union bound we see that the probability that any of the pairs advance further than the above is less than

\[\NumParticles^{\frac{1}{3}-\varepsilon}\NumParticles^{-\left(\frac{1}{3}+\varepsilon\right)}=\NumParticles^{-2\varepsilon}.\]

Since the particles separate before returning to depth $\frac{3+\varepsilon}{2} \log_{k-1} \NumParticles$, then their positions do not intersect with the paths of any other particles, and so they will not move any further in the process once they have separated.

Lastly, this tells us that any pair of particles that manages to reach a depth of $(\frac{5}{3}+\varepsilon) \log_{k-1} \NumParticles$ will then move at most $\left(\frac{1}{3}+\varepsilon\right)\log_{k} \NumParticles$ further steps.
Therefore, at the end of the process, with probability $1 - O(\NP^{-\ep})$ no particle will have reached a depth higher than

\[\left(\frac{5}{3}+\varepsilon\right) \log_{k-1} \NumParticles+\left(\frac{1}{3}+\varepsilon\right)\log_{k} \NumParticles \le \left(\frac{5}{3}+\frac{1}{3\log_{k-1}k}+2\varepsilon \right)\log_{k-1} \NumParticles.\]

%

\end{proof}

\section{Dispersion on paths} \label{SPath}

In this section we prove Theorem~\ref{Path}.
To analyse the case of a sufficiently large or infinite path,
we consider that $G$ is the integer line, i.e $V(G)=\mathbb{Z}$ and $(i,j)\in E(G)$ if and only if $|i-j|=1$. All $\NumParticles$ particles are initially placed at the origin $0$.
We will require the following result about random walks on the line.

\begin{lemma}\label{QVal}
For a simple random walk on the integer path, let $R(2T,r)$ be the probability of at least $r$ returns to the origin in $2T$ steps. For $\alpha>0$, we have
\begin{equation}\label{prob}
R(2T,\a\sqrt{2T}) \le
O\brac{\frac{1}{\a} e^{-\a^2/2}}.
\end{equation}
\end{lemma}
\begin{proof}

Let $z_{2T}^{(r)}$ be the probability of exactly $r$ returns to zero in $2T$ steps, then for $T \ge 1$, we have from Feller, Theorem 1, Section 3.6 in \cite{Feller_Vol1_2Edition}
\[
F_r= z_{2T}^{(r)}=\frac{1}{2^{2T-r}}{2T-r \choose T}.
\]
We need to calculate
\[
R(2T,r)=\sum_{s \ge r}F_s,
\]
namely, the probability of at least $r$ returns in time $2T$, for the case when $r=\a\sqrt{2T}$. If $r=T$ then $F_T=1/2^T$ (very small).

For $s \ge r\ge 1$
\[
F_{s+1}
/F_s= \frac{2T-2s}{2T-s} \le \frac{2T-2r}{2T-r}=F_{r+1}/F_r=\b.
\]
So
\[
R(2T,r) \le F_r(1+\b+\b^2+...)=F_r \frac{1}{1-\b}=F_r \frac{2T-r}{r}.
\]
Now, provided $r<T$
\begin{flalign*}
F_r=& (1+(1/T)+(1/(T-r)) \frac{1}{\sqrt{2\pi}}\sqrt{\frac{2T-r}{T(T-r)}}
\frac{2^r}{2^{2T}}\frac{(2T-r)^{2T-r}}{T^T(T-r)^{T-r}}
\\&= \Th(1)\sqrt{\frac{2T-r}{T(T-r)}}
\frac{(1-r/2T)^{2T-r}}{(1-r/T)^{T-r}}.
\end{flalign*}
But\begin{flalign*}
\frac{(1-r/2T)^{2T-r}}{(1-r/T)^{T-r}}=&\exp\brac{(2T-r) \log(1-r/2T)-(T-r)\log(1-r/T)}
\\
&=\exp\brac{-\frac{r^2}{4T}-\frac{r^3}{8T^2}-\cdots}\\
&\le \exp\brac{-\frac{r^2}{4T}}.
\end{flalign*}
Thus
\[
R(2T,r) \le \Th(1)\sqrt{\frac{2T-r}{T(T-r)}}\frac{2T-r}{r}\exp\brac{-\frac{r^2}{4T}}.
\]
Put $r=\a\sqrt{2T}$ to obtain (assuming $r<T$)
\[
R(2T,\a\sqrt{2T}) =O\brac{\frac{1}{\a} e^{-\a^2/2}},
\]
as required.

\end{proof}

The movement  of particle $i$ at time step $t$ takes a value $X_i(t)\in \{-1,0,1\}$, with $X_i(t)=0$ if the particle doesn't move.
For any particle  the next non-zero movement  is uniformly  distributed in $\{ -1,1\}$ and independent of the choice of particle, or of the action of any other particles.

If we consider only the steps where a given particle moves (walk steps), and ignore the time-steps in which the particle does not move, the particle makes a random walk on the line.
The particle moves only when its random walk intersects that of another particle. When two particles meet at a vertex,  reversing the walk of the second particle and taking the union of these two walks, gives a walk which has returned to the origin.

We consider the walk steps of  particles $1$ and $2$, and  build a sequence $Y(t)=Y(t,\{1,2\})$ as follows. Let $W \in \{-1,1\}$ denote the movement of a particle at a given walk step.
Note that if $W$ is uniformly and independently distributed on $\{-1,1\}$, then so is $-W$.
The entries of
$Y(t)$ are the movements $(W_1),(W_2)$ made by the two particles up to the end of time step $t$
in time step order. If both particles move at a given time step, the order is the movement of particle 1 followed by that of particle 2. The entries for particle 2 are the negative of the step direction $W_2$. Thus
$Y(0)=(W_1(0), -W_2(0))$ as both move from the origin. In general, $Y(t)=Y(t-1)$ if neither move, $Y(t)=(Y(t-1),W_1(t))$ if particle 1 moves but not particle 2,
$Y(t)=(Y(t-1),-W_2(t))$ if particle 2 moves but not particle 1, $Y(t)=(Y(t-1),W_1(t), -W_2(t))$ if both move.

Let $s_i(t)$ be the number of walk steps taken by  particle $i$ at the end of time step $t$. The length of $Y(t)$ is $s=s_1(t)+s_2(t)$. As $t \rai$, either $Y$ is infinite (dispersion never stops) or has a finite length $T$,  and $Y=(Y_1, Y_2,...,Y_T)$.
If $T$ is finite, extend $Y$  for $i>T$ by setting $Y_i\in\{-1,+1\}$,  chosen independently with probability $1/2$.

Note that without knowledge of the value of $T$,  each $Y_i$ is distributed uniformly and independently at random and so $Y$ is an infinite random walk on the line. Importantly, $Y^s=\sum_{i=1}^s Y_i$ visits the origin at least once for each time step when the original two particles intersected
up to the time step when they have moved $s$ walk steps in total.
Note that $Y$ may visit the origin more often than the particles intersect as they move simultaneously, so $Y$ can hit the origin for values of $i$ that lie between two simultaneous movements of the particles in one time step of the dispersion process.
\ignore{
\begin{theorem}
For all $\varepsilon>0$, given $\NP$ particles initially placed at the origin,
w.h.p. the dispersion process on the integers will terminate with no particle  at a distance greater than $(4+\varepsilon)\NP\log \NP$ from the origin.
\end{theorem}}
\begin{theorem}
For all $\varepsilon>0$, w.h.p. the dispersion process on the integers with $\NP$ particles at $0$, will terminate with no particle at a distance greater than $(4+\varepsilon)\NP\log \NP$ from the origin.
\end{theorem}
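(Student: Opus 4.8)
The plan is to deduce the distance bound from a bound on the number of \emph{walk steps} any single particle makes during the whole (possibly infinite) process, and then to control the displacement of that particle's predetermined walk over that many steps. Fix a small $\eta=\eta(\varepsilon)>0$, chosen at the end so that $\sqrt{2(1+\eta)(8+\eta)}\le 4+\varepsilon$, and set $s^\ast=\lceil(8+\eta)\NP^2\log\NP\rceil$. I will establish two events, each holding with probability $1-o(1)$: (a) no particle makes more than $s^\ast$ walk steps; and (b) for every particle $i$ the partial sums $Z^{(i)}_s$ of its predetermined infinite walk satisfy $\max_{0\le s\le s^\ast}|Z^{(i)}_s|<\sqrt{2(1+\eta)\,s^\ast\log\NP}$. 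On (a) the process must terminate, since every time step before dispersion moves at least two particles and so there are at most $\tfrac12\NP s^\ast<\infty$ of them; and on (a)$\cap$(b) every particle's position at \emph{every} time step is one of its first $s^\ast$ partial sums, hence of absolute value $<\sqrt{2(1+\eta)(8+\eta)}\,\NP\log\NP\le(4+\varepsilon)\NP\log\NP$ (this is in fact the ``during the process'' bound referred to after the theorem). Event (b) is routine: by a maximal/Hoeffding inequality for simple random walk, $\Pr\!\big(\max_{s\le s^\ast}|Z^{(i)}_s|\ge\lambda\big)\le 2\exp(-\lambda^2/2s^\ast)$, which for $\lambda=\sqrt{2(1+\eta)s^\ast\log\NP}$ is $2\NP^{-(1+\eta)}$ per particle, so a union bound over the $\NP$ particles leaves $\Pr(\overline{\text{(b)}})=O(\NP^{-\eta})=o(1)$.

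The core of the argument is (a), which uses the combined walk $Y_{ij}=Y(\cdot,\{i,j\})$ constructed before the theorem together with Lemma~\ref{QVal}. Set $\alpha=\sqrt{(4+\eta/2)\log\NP}$ and $r^\ast=\alpha\sqrt{2(s^\ast+1)}$, and let $E$ be the event that for every pair $\{i,j\}$ the walk $Y_{ij}$ makes fewer than $r^\ast$ returns to the origin in its first $2(s^\ast+1)$ steps. Since each $Y_{ij}$ is an infinite simple random walk on $\mathbb{Z}$ irrespective of the dispersion dynamics (as observed above), Lemma~\ref{QVal} bounds the per-pair failure probability by $O(\alpha^{-1}e^{-\alpha^2/2})=O(\NP^{-(2+\eta/4)}/\sqrt{\log\NP})$, and a union bound over the $\binom{\NP}{2}$ pairs gives $\Pr(\overline E)=o(1)$; one also checks $r^\ast<s^\ast+1$ so that the hypothesis of Lemma~\ref{QVal} is met, which holds for large $\NP$ since $s^\ast=\Theta(\NP^2\log\NP)$ while $2\alpha^2=\Theta(\log\NP)$. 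On $E$, suppose toward a contradiction that some particle makes more than $s^\ast$ walk steps, and let $t^\ast$ be the first time step at which some particle --- call it particle $1$ --- reaches its $(s^\ast+1)$-st walk step. Each of particle $1$'s walk steps at times $1,\dots,t^\ast$ is taken while it shares its vertex with at least one other particle; charging that step to the lowest-indexed such particle $j$ gives $s^\ast+1\le\sum_{j\ne1}m_{1j}$, where $m_{1j}$ counts the time steps in $\{1,\dots,t^\ast\}$ at the start of which particles $1$ and $j$ are co-located. Every such co-location forces a return of $Y_{1j}$ to the origin, and by minimality of $t^\ast$ the combined walk-step count of $1$ and $j$ by time $t^\ast$ is at most $2(s^\ast+1)$, so on $E$ we get $m_{1j}<r^\ast$. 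Hence $s^\ast+1<(\NP-1)r^\ast=(\NP-1)\sqrt{(4+\eta/2)\log\NP}\,\sqrt{2(s^\ast+1)}$, i.e.\ $s^\ast+1<(8+\eta)(\NP-1)^2\log\NP\le s^\ast$, a contradiction; thus on $E$ no particle makes more than $s^\ast$ walk steps, establishing (a).

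The main obstacle is organisational rather than conceptual. Because each $Y_{ij}$ is an infinite walk and returns to the origin infinitely often, one only ever controls its returns ``within the first $m$ steps'', so the walk-step bound has to be proved self-referentially; the ``first time some particle exceeds $s^\ast$'' stopping argument above is exactly what lets the within-$2(s^\ast+1)$-steps estimate feed back on itself. The second delicate point is constant-chasing: the walk-step bound must come with a constant essentially $8$ and the exponent in Lemma~\ref{QVal} essentially $4$, so that the two square roots in (b) multiply to $\sqrt{2\cdot 8}=4$; everything else --- the maximal inequality, the union bounds, and the fact that reversing one particle's steps and interleaving them according to the (random) dispersion schedule still produces a genuine simple random walk $Y_{ij}$, so that Lemma~\ref{QVal} applies --- is supplied by the constructions and lemma already in place.
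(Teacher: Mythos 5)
Your proposal is correct and follows essentially the same route as the paper: bound pairwise meetings by returns to the origin of the reversed-and-interleaved combined walk $Y_{ij}$, apply Lemma~\ref{QVal} with $\alpha^2\approx 4\log \NP$ and a union bound over pairs, derive the $\approx 8\NP^2\log\NP$ walk-step bound by the same ``first particle to exceed the threshold'' contradiction, and finish with a Hoeffding-type tail for the displacement. The only (harmless) deviation is that you use a maximal inequality for the partial sums, which gives the ``at every time step'' bound directly with the constant $4+\varepsilon$, whereas the paper bounds the endpoint for the theorem and separately notes a weaker $4\sqrt2(1+\varepsilon)$ during-process bound.
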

\begin{proof}
At each time step $t \ge 0$ let $Z_i(t)$ be the walk length, i.e. the total number of  walk steps made by  particle $i$ up to the end of time step $t$.
 Let $K$ be a large constant, and let $F_{ij}$ be the event that
\[
F_{ij}=\{\text{Particles }i,j \text{ with walk lengths } Z_i,Z_j \le S+K \text{
meet more than }\alpha (2S)^{1/2} \text{ times}\},
\]
and let $F=\cup_{i \ne j} F_{ij}$.
Using the sequence $Y(t, \{i,j\})$ for the particles $i,j$ (as described above for particles 1,2) we can upper bound the number of  meetings between the particles, by the number of returns to the origin of the  random walk $Y(t)$.
As returns are monotone non-decreasing with the number of walk steps,
if $Z_i+Z_j < 2(S+K)$ we can extend $Y$ to $2(S+K)$ and include any extra returns.

Let $\alpha^2=4(1+\epsilon) \log \NP$, and $S=2\alpha^2\NP^2$. Thus
 $\alpha S= ((1-O(K/S))\alpha (S+K)$. Using \eqref{prob} of Lemma \ref{QVal} with
$T=S+K$, $\beta=\alpha (1-O(K/S))$ gives that the probability of at least $\beta\sqrt{2(S+K)}$ returns in $2(S+K)$ steps, satisfies,
\[
R\left(2(S+K),\beta\sqrt{2(S+K)}\right)=O\left(\NP^{-(2+\epsilon)}\right).
\]
Thus
\[
\Pr(F) \le \NP^2O(\NP^{-(2+\epsilon)})=O(\NP^{-\epsilon}).
\]
Suppose there exits a particle which takes more than $S$ walk steps, then we pick the first particle (in process time steps $t$) to make $S+1$ walk steps, choosing the particle with the lowest label $I$ if there is any choice.

Recall that $Z_i(t)$
is the number of
 walk steps made by  particle $i$ by time step $t$, and let $R_{ij}(t)$ be the number of
 walk steps when particles $i$ and $j$ occupy the same vertex. Note that each such pair-wise meeting causes both particles to make one step of a random walk, so every walk step is counted at least once. Thus $Z_i(t) \le \sum_{j \ne i} R_{ij}(t)$.
We assume the event $F^c$ holds.
At $t$ for all $i$
\[
Z_i \le \sum_{j \ne i} R_{ij}\le \NP \alpha (2S)^{1/2}.
\]
In particular
\[
Z_I =S+1 \le \NP\alpha (2S)^{1/2},
\]
and thus
\[
(S+1)^2 \le 2 \alpha^2 \NP^2 S=S^2.
\]
This implies that $(S+1) \le S$, contradicting the existence of a first $t$ where some particle exceeds $S$ steps and therefore no particle takes more than $S$ walk steps during the process.

By using a Chernoff bound for the sum of $s$ independent and uniform $\{-1,1\}$ random variables, we see that the probability a random walk reaches a distance greater than $a$ in $s$ walk steps is less than $2e^{-\frac{a^2}{2s}}$.
Suppose a particle is at a distance greater than $4(1+\varepsilon)\NP\log \NP$ from the origin. We must have that the particle either took more than $8(1+\varepsilon) \NP^2\log \NP$ steps, or otherwise, by the above, we would have the probability of this occurring to be less than
\[2e^{-\frac{\left(4(1+\varepsilon)\NP\log \NP\right)^2}{16(1+\varepsilon) \NP^2\log \NP}}=
2e^{-(1+\varepsilon)\log \NP}=2\NP^{-(1+\varepsilon)}.\]
Taking the union bound, with high probability no particle could have reached a distance of $4(1+\varepsilon)\NP\log \NP$.

This bound applies to the maximum distance any particle will be at from the origin at the end of the process. It is possible that a particle may reach a further distance and return before the process terminates. Taking a union bound over the steps of the particle's walk, we can use the same argument to show that at no point in the process, could any particle reach
a distance of $4\sqrt{2}(1+\varepsilon)\NP\log \NP$. Therefore this process will disperse in the same manner on the infinite line as on any finite path or cycle of size larger than $8\sqrt{2}(1+\varepsilon)\NP\log \NP$, which is less than $12\NP\log \NP$ for $\ep = 0.2$.
 \end{proof}

 \section{Dispersion on grids and infinite Cayley graphs}\label{Grid}

In this section we prove Theorem~\ref{Caley}, part $(i)$ and show its implication for the dispersion in
the $2$-dimensional grid.

\begin{lemma}
Let $\om = \om(\NP) \rai$.
Let $G$ be a $d$-dimensional (infinite) grid ($d \ge 1$) or other infinite Cayley graph,
and let $t$ be such that $t \ge \om {\NP \choose 2} R(2t)$,
where $R(2t)$ is the expected number of returns to the origin in $2t$ steps by a simple random walk on $G$.
Then with probability at least $1 - 1/\om$,
a system of $\NP$ particles disperses on $G$ in $t$ process steps
\end{lemma}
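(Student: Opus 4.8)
The strategy is the "meetings as returns" argument previewed in the methodology section. For each pair of particles $\{i,j\}$, define the combined walk $Y(t,\{i,j\})$ exactly as in the path case: concatenate, in process-time order, the walk steps of particle $i$ and the \emph{negated} walk steps of particle $j$, padding with fresh uniform group elements once the dispersion (and hence the pair's interaction) has finished. The key symmetry fact we need about Cayley graphs of Abelian groups is that if $g$ is uniform on the symmetric generator set $S$, then so is $-g$; hence $Y$ is a bona fide simple random walk on $G$ started at the origin, and whenever particles $i$ and $j$ occupy a common vertex at some process step, the partial sum of $Y$ up to that point equals the identity element, i.e. $Y$ has returned to the origin. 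So the number of walk-step meetings between $i$ and $j$ is at most the number of returns to the origin of $Y$ in its first (at most) $Z_i+Z_j$ steps.

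First I would fix a target time $t$ with $t \ge \om \binom{\NP}{2} R(2t)$ and suppose, for contradiction, that dispersion has not finished by step $t$. Since at every process step at least two particles move, $\sum_i Z_i(t) \ge 2t$, so some particle has made at least $2t/\NP$ walk steps; more crudely, at least one particle has made $\ge \lceil 2t/\NP\rceil$ walk steps by step $t$. Now bound, for each pair, the expected number of returns to the origin made by $Y(t,\{i,j\})$ within $2t$ steps: this is at most $R(2t)$ by definition (extending $Y$ to length $2t$ only adds returns, using monotonicity). By Markov's inequality and a union bound over the $\binom{\NP}{2}$ pairs, with probability at least $1 - \binom{\NP}{2} R(2t)/(\om \binom{\NP}{2} R(2t)) \cdot (\text{const})$... more precisely, let $M_{ij}$ be the number of meetings of $i,j$ among the first $2t$ steps; then $\sum_{i<j}\expect M_{ij} \le \binom{\NP}{2} R(2t) \le t/\om$, so by Markov $\Pr(\sum_{i<j} M_{ij} \ge t) \le 1/\om$. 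On the complementary event, the total number of walk-step meetings across all pairs is less than $t$. But every process step before termination contributes at least one pairwise meeting (two particles share a vertex), and each such meeting is counted in some $M_{ij}$; since we assumed the process runs past step $t$, the total meeting count is at least $t$ — a contradiction. Hence with probability at least $1-1/\om$ the process has terminated by step $t$.

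The one delicate point, and the step I expect to require the most care, is the bookkeeping that converts "process steps before termination" into "pairwise walk-step meetings counted by the $M_{ij}$" without double- or under-counting, together with making sure the extension of $Y$ past the termination time is legitimately independent and uniform (this is exactly why the predetermined-walk device is in place: the future of each $Y$ is untouched randomness). One must be slightly careful that at a single process step several disjoint pairs may meet simultaneously, and that a meeting of three-or-more particles at one vertex should be counted as at least one pairwise meeting; in all cases, one meeting event at a vertex with $\ge 2$ particles yields at least one increment of some $M_{ij}$, which is all that is needed for the contradiction. The grid/Cayley-graph input enters only through the definition of $R(2t)$ and the $-g \sim g$ symmetry; plugging in $R(2t) = \Theta(\sqrt t), \Theta(\log t), \Theta(1)$ for dimensions $1$, $2$, $\ge 3$ then yields the stated explicit bounds.
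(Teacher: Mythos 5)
Your proposal is correct and follows essentially the same route as the paper: the pairwise combined walk $Y(t,\{i,j\})$ with negated steps for the second particle (using the symmetry of the generator set and the Abelian group structure), the bound $\E M_{ij} \le R(2t)$, Markov's inequality applied to the total meeting count against the threshold $t$, and the observation that each pre-termination process step forces at least one pairwise meeting. The delicate points you flag (fresh randomness in the extension of $Y$, and counting at least one pairwise meeting per multiply-occupied vertex) are exactly the ones the paper handles, and your treatment of them is sound.
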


\begin{proof}
We use the same argument as for the line, linking times that two particles meet
in a grid or a Cayley graph
with the number of returns to the origin of a single combined random walk.

For ease of comprehension, we work with the $2$-dimensional grid, making comments to show that the arguments apply to a Cayley graph.
The movement of particle $i$ at time step $t$ takes a value $X_i(t)\in \{(-1,0),(1,0),(0,0),(0,-1), (0,1)\}$, with $X_i(t)=(0,0)$ if the particle doesn't move.
For any particle  the next non-zero movement  is uniformly  distributed in $\{(-1,0),(1,0),(0,-1), (0,1)\}$ and independent of the choice of particle, or of the action of any other particles.
In a Cayley graph, $X_i(t) \in S$, where $S$ is the symmetric set of generators which define the graph.

If we consider only the steps where a given particle moves (walk steps), and ignore the time-steps in which the particle does not move, the particle makes a random walk on the grid (or a Cayley graph).
The particle moves only when its random walk intersects that of another particle. When two particles meet at a vertex,  reversing the walk of the second particle and taking the union of these two walks, gives a walk which has returned to the origin.
For the case of a Cayley graph, we need at this point the assumption that the underlying group is abelian.

We consider the walk steps of  particles $1$ and $2$, and  build a sequence $Y(t)=Y(t,\{1,2\})$ as follows. Let $W \in \{(-1,0),(1,0),(0,-1), (0,1)\}$ denote the movement of a particle at a given walk step.
Note that if $W$ is uniformly and independently distributed on $\{(-1,0),(1,0),(0,-1), (0,1)\}$, then so is $-W$.
The entries of
$Y(t)$ are the movements $(W_1),(W_2)$ made by the two particles up to the end of time step $t$
in time step order. If both particles move at a given time step, the order is the movement of particle 1 followed by that of particle 2. The entries for particle 2 are the negative of the step direction $W_2$. Thus
$Y(0)=(W_1(0), -W_2(0))$ as both move from the origin. In general, $Y(t)=Y(t-1)$ if neither move, $Y(t)=(Y(t-1),W_1(t))$ if particle 1 moves but not particle 2,
$Y(t)=(Y(t-1),-W_2(t))$ if particle 2 moves but not particle 1, $Y(t)=(Y(t-1),W_1(t), -W_2(t))$ if both move.

Let $s_i(t)$ be the number of walk steps taken by  particle $i$ at the end of time step $t$. The length of $Y(t)$ is $s=s_1(t)+s_2(t)$. As $t \rai$, either $Y$ is infinite (dispersion never stops) or has a finite length $T$,  and $Y=(Y_1, Y_2,...,Y_T)$.
If $T$ is finite, extend $Y$  for $i>T$ by setting $Y_i\in\{(-1,0),(1,0),(0,-1), (0,1)\}$, each chosen independently with probability $1/4$.

Note that without knowledge of the value of $T$,  each $Y_i$ is distributed uniformly and independently at random and so $Y$ is an infinite random walk on the grid. Importantly, $Y^s=\sum_{i=1}^s Y_i$ visits the origin at least once for each time step when the original two particles intersected
up to the time step when they have moved $s$ walk steps in total.
Note that $Y$ may visit the origin more often than the particles intersect as they move simultaneously, so $Y$ can hit the origin for values of $i$ that lie between two simultaneous movements of the particles in one time step of the dispersion process.

\ignore{
Considering this, we can see that the probability that the two particles meet more than $k$ times is less than the probability that the combined walk visits the origin more than $k$ times.
}

Let $r_D(s)$ be the probability $Y^s=(0,0)$ and let $r(s)$ be the probability of a return to the origin at step $s$ of a simple random walk on the grid. Then
\[
\sum_{s=0}^t r_D(s) \le \sum_{s=0}^{2t} r(s) = R(2t),
\]
where $R(2t)$ is the expected number of returns to the
origin of a random walk during $2t$ steps.

For a system of $\NP$ particles dispersing from the origin,
the above discussion bounds the expected number of meetings of a given pair of
particles in $t$ steps by $R(2t)$.
Let $Z(t)$ be the number of pairwise meetings between $\NP$ particles
in $t$ process steps. Then,
\[
E(Z(t)) \le {\NP \choose 2} R(2t)
\]
Given $\omega \rai$,
take any $t$ satisfying
\begin{equation}\label{hcwiqbc3}
 {\NP \choose 2}R(2t) \le  t/\omega.
\end{equation}
By Markov's inequality we have
$P(Z(t) \ge \omega \E(Z(t)) \le 1/\omega$.
Then  with probability $1-O(1/\omega)$
\[
Z(t) \le \om \E(Z(t)) \le \omega {\NP \choose 2} R(2t)< t.
\]
Since at least one pair of particles moves during each step of dispersion,
the number of meetings $Z(t')$ must be larger than $t'$ for each process step $t'$,
therefore the process must have stopped before step $t$.
\end{proof}

\begin{lemma}
With probability $1- O(1/\om)$, a system of $\NP$ particles disperses on the 2-dimensional grid in $2 \om \NP^2 \log \NP$
process steps.
\end{lemma}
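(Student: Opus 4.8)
The plan is to derive this as an immediate corollary of the preceding lemma (equivalently, of Theorem~\ref{Caley}(i)): it suffices to exhibit a value of $t$ with $t \le 2\om\NP^2\log\NP$ for which the hypothesis $t \ge \om\binom{\NP}{2}R(2t)$ holds, where $R(2t)$ is the expected number of returns to the origin of a simple random walk on the $2$-dimensional grid in $2t$ steps. The lemma then gives dispersion within $t$, hence within $2\om\NP^2\log\NP$, process steps with probability at least $1-1/\om = 1-O(1/\om)$.

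The one external ingredient is the growth rate of $R$ on the grid. By the local central limit theorem (or Feller), the $2k$-step return probability of the simple random walk on $\mathbb{Z}^2$ satisfies $P^{2k}(0,0) = \Theta(1/k)$, so summing over $k \le t$ gives $R(2t) = \Theta(\log t)$; concretely $R(2t) \le c\log(2t)$ for an absolute constant $c$ (one may take $c$ close to $1/\pi$).

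It then remains to check the hypothesis for $t = 2\om\NP^2\log\NP$. Since $\om = \om(\NP)\to\infty$ is at most polynomially large in $\NP$ (so that $\log\om = O(\log\NP)$), we have $\log(2t) = \log(4\om\NP^2\log\NP) = O(\log\NP)$, and therefore
\[
\om\binom{\NP}{2}R(2t) \;\le\; \om\cdot\frac{\NP^2}{2}\cdot c\log(2t) \;\le\; 2\om\NP^2\log\NP \;=\; t
\]
for all sufficiently large $\NP$. Applying the preceding lemma with this $t$ completes the argument.

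The self-referential form of the hypothesis — $t$ occurring inside $R(2t)$ — is the only point that needs care, but it is harmless here precisely because $R$ grows only logarithmically: doubling, or polynomially inflating, $t$ changes $R(2t)$ by only an additive $O(\log\NP)$, which the slack in the leading constant absorbs. The genuine ``cost'' is that, unlike grids of dimension at least $3$ (where $R(2t)=\Theta(1)$ and one obtains $t = O(\om\NP^2)$), the recurrence-driven $\Theta(\log t)$ return count on $\mathbb{Z}^2$ is exactly what forces the extra $\log\NP$ factor in the stated bound; there is no deeper obstacle.
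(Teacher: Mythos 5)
Your proof is correct and follows essentially the same route as the paper: the paper likewise bounds $R(2t)=\sum_{s=0}^{t}\binom{2s}{s}^2 4^{-2s}\le \log t + c$ and then checks ``by inspection'' that $t=2\om\NP^2\log\NP$ satisfies $\binom{\NP}{2}R(2t)\le t/\om$. Your explicit caveat that $\log\om=O(\log\NP)$ is needed to close the self-referential check is a point the paper silently glosses over, and is a fair (indeed necessary) reading of the statement.
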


\begin{proof}
The value of $R(2t)$ is (see e.g. \cite{Feller_Vol1_2Edition} page 328)
\[
R(2t)=\sum_{s=0}^t {2s \choose s}^2 \frac{1}{4^{2s}} \le \log t +c,
\]
for some  constant $c$.
By inspection we see that $t = 2 \om \NP^2 \log \NP$ satisfies~\eqref{hcwiqbc3}.
\end{proof}

\section{Hypercube and finite Cayley graphs}\label{Sec:Hypercube}

In this section we present two different bounds for dispersion on the hypercube using two different methods.

The first method is generally applicable with little modification to general finite Cayley graphs. The method follows a similar structure to that used in the previous section for infinite graphs. The key difference is the difficulty in bounding the number of meetings of particles in a finite graph. To work around this, we first allow the dispersion process to approach mixing time until the probability of being at any given vertex is close to uniform, assuming a trivial bound on the number of meetings in this period. Once we have reached this mixing time, we can use this uniformity to derive a bound on two particles being at the same vertex.

The proof of the lemma below can be easily generalised to a proof of Theorem~\ref{Caley} part $(ii)$.

\begin{lemma}
Let $H_d$ be the hypercube on $n=2^d$ vertices.
Then with probability $1 - O(1/\om)$, a system of $\NP \le \sqrt{n}/\om$ particles disperses on the hypercube $H_d$ in
$O(\NP^2\log^2 n)$ process steps.
\end{lemma}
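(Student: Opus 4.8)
The plan is to follow the two-phase strategy outlined in the prose preceding the lemma: a burn-in phase during which we wait for the walks to approach stationarity, followed by a phase in which near-uniformity gives a good bound on the pairwise meeting probability. First I would fix $T = \Theta(\log^2 n)$, the mixing parameter for the hypercube, so that for all even $s \ge T$ and all vertices $u$, $|P^s(u,u) - 1/n'| \le 1/(2n')$, where $n' = n/2$ since $H_d$ is bipartite (so $P^s(u,u) \le 3/(2n') = 3/n$ for even $s \ge T$, and $=0$ for odd $s$). As in Section~\ref{Grid}, I would use the predetermined-walk device and, for each pair $\{i,j\}$, build the combined reversed walk $Y(t,\{i,j\})$ so that every process step at which particles $i$ and $j$ meet corresponds to a return to the origin of $Y$ up to the number of walk steps taken. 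The key point carried over from the grid argument is that bounding the number of pairwise meetings in $t$ process steps suffices: since at least one meeting occurs per process step while the process is running, if the total meeting count over all pairs stays below $t$ with good probability, the process must have terminated before step $t$.

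The main work is bounding $\E[Z(t)]$, the expected total number of pairwise meetings in $t$ process steps, where $t = O(\NP^2 \log^2 n)$. I would split each combined walk $Y$ into the first $T$ steps and the remaining steps. For the first $T$ walk steps of a pair, use the trivial bound that they meet at most $T$ times; summed over $\binom{\NP}{2}$ pairs this contributes $O(\NP^2 T) = O(\NP^2 \log^2 n)$ meetings — but note this has to be compared against $t$, so I actually want to be a little careful: the burn-in contribution should be absorbed. The cleaner route is: the number of returns of $Y$ to the origin among walk steps $s \in [T, L]$ is at most $\sum_{\text{even } s=T}^{L} r_D(s) \le \sum_{\text{even } s=T}^{L} 3/n \le 3L/(2n)$, using $r_D(s) \le P^s(\mathbf{0},\mathbf{0})$ for the combined walk on the Cayley graph (here I invoke that the hypercube is an abelian Cayley graph so the combined-walk reduction of Section~\ref{Grid} applies and $r_D(s) \le \max_u P^s(u,u)$). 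Since no particle makes more than $\NP \cdot (\text{total meetings})$ walk steps... — rather, I would set $L$ = the total number of walk steps, which by the same bootstrap argument as the path proof is $O(t/\NP)$ per particle if the process has not stopped, hence $\E[Z(t)] \le \binom{\NP}{2}\big(T + 3L/(2n)\big)$.

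So the bootstrap is the crux. Suppose for contradiction the process has not terminated by step $t$. Then $Z(t) \ge t$. On the other hand, by Markov's inequality applied to the bound $\E[Z(t)] \le \binom{\NP}{2} T + \binom{\NP}{2}\cdot \tfrac{3}{2n}\cdot O(t/\NP)$ — wait, I need $L$ in terms of $t$: each of the $\NP$ particles makes at most $2Z(t)/2 \le \ldots$ hmm, total walk steps $\sum_i N_i(t)$ satisfies $\sum_i N_i(t) \le 2 Z(t)$ (each meeting causes $2$ walk steps, counted with multiplicity across pairs this overcounts, giving $\sum_i N_i(t) \le 2Z(t)$ is actually the wrong direction). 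Let me instead bound the per-pair walk-step count $Z_i + Z_j \le 2\max_i Z_i \le 2\sum_{j\ne i}R_{ij}(t) \le 2Z(t)$. Plugging $L = 2Z(t)$ into the meeting bound yields $\E[Z(t)] \le \binom{\NP}{2}T + \binom{\NP}{2}\cdot \tfrac{3 Z(t)}{n}$, and since $\NP \le \sqrt{n}/\om$ we have $\binom{\NP}{2}\cdot 3/n \le 3/(2\om^2) \le 1/2$ for large $n$, so $\E[Z(t)] \le 2\binom{\NP}{2}T = O(\NP^2\log^2 n)$ after absorbing. Then $\Pr(Z(t) \ge \om \E[Z(t)]) \le 1/\om$, and choosing $t = C\om \NP^2 \log^2 n$ for a suitable constant $C$ makes $\om\E[Z(t)] < t$, giving the contradiction with probability $1 - O(1/\om)$. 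The delicate point — and the step I expect to fight with — is making the circular dependence between $Z(t)$ (random) and the walk-step bound $L = 2Z(t)$ rigorous; the right way is to run the argument on the stopped process or to fix $L$ a priori as the largest value consistent with $Z(t) < t$ and then show that regime leads to contradiction, exactly paralleling the "first $t$ where some particle exceeds $S$ steps" device in the path proof. With that in hand the $\NP = o(\sqrt{n/\om})$ hypothesis and $T = O(\log^2 n)$ give the claimed $O(\NP^2\log^2 n)$ bound.
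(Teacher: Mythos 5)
Your overall strategy is the paper's: burn-in to (near-)stationarity, the combined reversed walk $Y(t,\{i,j\})$ from Section~\ref{Grid}, the bound $P^s(u,u)\le 3/n$ for even $s\ge T$, Markov's inequality, and the observation that at least one pairwise meeting occurs per process step while the process runs. But there is a gap at exactly the point you flag: the circular dependence between $Z(t)$ and the walk-length $L=2Z(t)$ never gets resolved, and it does not need to arise at all. Each particle makes at most one walk step per process step, so in $T+t$ process steps the combined walk of any pair has length at most $2(T+t)$ \emph{deterministically}. That is the bound the paper uses for $L$; no bootstrap, no stopped process, no analogue of the ``first particle to exceed $S$ steps'' device from the path proof is needed here. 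With $L\le 2(T+t)$ fixed a priori, the expected number of meetings of a pair is at most $2T + c\cdot 2t/n$ (trivial bound on the first $2T$ walk steps, near-uniform return probability afterwards), and the rest of the argument goes through.

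A second, quantitative issue: you apply Markov's inequality to the full $\E[Z(t)]\approx\binom{\NP}{2}T$, which forces $t\ge \om\binom{\NP}{2}T$ and yields only $O(\om\,\NP^2\log^2 n)$, weaker than the stated $O(\NP^2\log^2 n)$. The paper avoids the extra $\om$ by splitting: the burn-in contribution is bounded by $\binom{\NP}{2}\cdot 2T$ \emph{almost surely} (a walk of length $2T$ returns at most $2T$ times), so Markov is applied only to the post-mixing meetings, whose expectation is $O\bigl(\binom{\NP}{2}\,t/n\bigr)=O(t/\om^2)$ by the hypothesis $\NP\le\sqrt{n}/\om$; multiplying that piece by $\om$ still gives $o(t)$, and the deterministic piece only requires $2T\NP^2\le t$. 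Incorporating these two fixes turns your outline into essentially the paper's proof.
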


\begin{proof}

The hypercube $H_d$ on $n=2^d$ vertices consists of vertices labeled as vectors in $\{0,1\}^d$ and edges $uv$ between vertices $u$ and $v$ whenever the vertex labels differ in a single coordinate (Hamming distance one).
Let $e_j$ be the vector whose entries are zero except at the $j$-th coordinate whose entry is one. A transition $(u,v)$ of a random walk on $H_d$ can be modeled by sampling $j \in \{1,...,d\}$ uar and setting $v=(u+e_j)$mod $2$. We refer to $e_j$ as the transition vector.

Mimicking the argument for the grid given previously, we consider the walk steps of  particles $1$ and $2$, and  build a sequence $Y(t)=Y(t,\{1,2\})$ consisting of the transition vectors of the walks of particle 1 and particle 2 at each step in that order. Because $e_j+e_j=0$ mod $2$ we do not need to multiply the second transition vector by $(-1)$ as in the case of the grid.

Let $T$ be a step such that for all even $s \ge T$, then $|P^s(u,u)-1/n| \le 1/2n$ for all $u \in V$. Then $T=O(\log^2 n)$. Let $\NP\le \sqrt{n}/\om$ and $t = T\rdup{\NP^2/(1-\NP^2c\om/2n)}$, where $c \le 3/2$. The expected number of meetings between a pair of particles in $T+t$ steps is at most the expected number of returns to the origin of the walk $Y(t)$ which is at most $2T+2ct/n$.
As in the argument above, with probability
$1-O(1/\om)$ the total number of pairwise meetings in $T+t$ steps is at most
\[
{\NP \choose 2} 2T + \om {\NP \choose 2} c \frac{2t}{n} < T+t,
\]
which is true provided e.g.
\[
2 T\NP^2 \le  t.
\]

\end{proof}

We can improve (for most values of $\NP$ and $n$) on this upper bound for dispersion on the hypercube by using the technique we used in analysing regular trees.

\begin{lemma}
Let $H_d$ be the hypercube on $n=2^d$ vertices.
Then with probability $1 - O(1/\om)$, a system of $\NP \le \sqrt{n}/\om$ particles disperses on the hypercube $H_d$ in
$O(\NP\log^3 n  )$ process steps, and no particle will lie further than $d/2$ from the origin.
\end{lemma}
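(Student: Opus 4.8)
The plan is to mirror the tree argument of Section~\ref{SkReg}, using the branching-like structure of the hypercube: vertices at distance exactly $d/2$ from the origin are numerous ($\binom{d}{d/2} = \Theta(2^d/\sqrt d) = \Theta(n/\sqrt{\log n})$ of them), and the simple random walk started at the origin, restricted to its first visit to level $d/2$, is fairly spread out among them. Since $\NP \le \sqrt{n}/\om \ll \sqrt{n/\log n}$, a balls-in-bins / birthday calculation will show that with probability $1-O(1/\om)$ no two of the $\NP$ predetermined walks ever share a vertex at level $d/2$ or beyond. As in the tree proof, for a particle to move past level $d/2$ in the dispersion process, some other particle must visit the same vertex at level $d/2$; ruling that out forces every particle to stop while at distance $< d/2$ from the origin, hence $\DD \le d/2$. (A small technical point: I first need to bound, for a fixed vertex $v$ at level $d/2$, the probability that a second independent walk ever hits $v$, uniformly over $v$; this can be done crudely via the mixing bound $|P^s(u,u)-1/n|\le 1/(2n)$ for even $s\ge T=O(\log^2 n)$, summing the hitting-probability series, which gives a bound of order $T/n + 1/\sqrt n$ or so, small enough after multiplying by $\binom{\NP}{2}$.)

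For the time bound $O(\NP\log^3 n)$, I would follow the device highlighted in the introduction: if w.h.p.\ no particle makes more than $x$ walk steps before dispersion terminates, then $\TD \le \NP x/2$. So it suffices to show each particle makes only $O(\log^3 n)$ walk steps. First, a particle that is at distance $< d/2$ and unhappy must be sharing its vertex with another particle. The key claim is that once two particles meet at some vertex $u$ with $\mathrm{dist}(u,\mathrm{origin}) < d/2$, they separate again within $O(\log^3 n)$ of their subsequent joint walk steps with high probability, \emph{and} — crucially — once separated they never remeet, because remeeting would require a third particle, which by the level-$d/2$ argument cannot happen at any deep enough vertex, and at shallow vertices one uses that any common ancestor-type vertex sufficiently far toward the origin is hit only with small probability by the reversed combined walk. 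The bound on the number of joint steps before two coupled particles separate comes from analysing the "Hamming distance between the two particles" process: each joint walk step, the coordinate flips are chosen independently, and the distance performs a biased walk away from $0$; the probability the two walks stay together for $s$ joint steps is roughly $(1/d)^{s}$-ish per step away from coincidence, so after $O(\log n)$ joint steps they separate w.h.p., and a union bound over all $O(\NP^2)$ pairs and over $\mathrm{poly}(n)$ time costs only $\log$ factors, yielding $O(\log^3 n)$ walk steps per particle. Hence $\TD = O(\NP\log^3 n)$.

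The main obstacle — exactly as the authors flag when they say "to remove a factor of $\NP$ or so from the above results takes some work" — is controlling \emph{re-meetings}. It is easy to show that a particle cannot advance past level $d/2$, but to get the improved time bound one must argue that a particle, once it has separated from its partner, never gets reactivated; this requires simultaneously handling (a) particles that separated deep in the cube (handled by the no-three-particles-at-a-deep-vertex argument, analogous to Lemma~\ref{Lemma:No3Particles}) and (b) particles that separated near the origin, where one must bound the probability that the reversed combined walk of a pair returns from depth $\approx d/2$ toward the origin — this is where the biased-walk hitting-probability estimate $\Pr(\text{hit ancestor at relative depth }h) \approx (1/?)^{h}$ on the hypercube is needed, and making this rigorous on the hypercube (rather than the tree, where it is literally Feller's gambler's ruin) is the delicate part. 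I expect the cleanest route is to reduce the "near-origin re-meeting" case to the tree estimate by noting that the distance-from-origin coordinate of a hypercube walk is itself a (level-dependent) birth–death chain whose downward drift near level $d/2$ is at least as strong as in the $k$-regular tree for suitable $k$, so Feller's bound transfers; alternatively one absorbs this case into a single union bound using the mixing-time bound already available. Assembling (a) and (b) into one high-probability event, and then concluding $\DD \le d/2$ and $\TD = O(\NP\log^3 n)$, finishes the proof.
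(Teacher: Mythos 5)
Your distance bound is essentially the paper's argument, but one part of it as stated would fail. You cannot show that ``no two of the $\NP$ predetermined walks ever share a vertex at level $d/2$ or beyond'': on a finite graph two infinite walks eventually meet everywhere with probability $1$, and your fallback estimate (hitting probability of order $T/n+1/\sqrt n$ per fixed vertex, summed over $\binom{\NP}{2}\approx n/\om^2$ pairs) gives a bound of order $\sqrt n/\om^2$, which is useless. What is actually needed --- and what the paper proves --- is only that no two particles have the \emph{same first-visit vertex} at level $d/2$, combined with the iteration ``for a particle to move on from its first level-$d/2$ vertex, another particle must visit that vertex, and that particle either has the same first-visit vertex or had already been at level $d/2$ elsewhere, so iterate.'' By symmetry the first-visit vertex of each walk is uniform over the $\binom{d}{d/2}\ge 2^{d-1}/\sqrt{d/2}$ vertices at that level, so a union bound over pairs gives probability at most $\binom{\NP}{2}\sqrt{d/2}\,2^{-(d-1)}\le \sqrt{2d}/\om^2 \to 0$. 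No hitting-time or mixing estimates are required.

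The time bound is where there is a genuine gap: your route through ``once two particles separate they never re-meet'' is both unnecessary and not salvageable as stated --- particles near the origin do re-meet repeatedly, and bounding the duration of each coincidence does not bound the number of distinct meetings a particle participates in, which is what drives its total walk-step count. The paper avoids all of this with one observation you are missing: the distance argument already shows (w.h.p.) that \emph{no particle ever moves past the first visit to level $d/2$ in its predetermined walk}, so the number of walk steps any particle makes is at most the hitting time $\tau$ of level $d/2$ by its own predetermined walk, regardless of how often it meets other particles. That hitting time is controlled by a simple drift argument: while the walk is at distance less than $d/2$ from the origin, each step increases the distance with probability at least $1/2$, so the distance process dominates a simple random walk on the line, whose expected time to reach $d/2$ is $(d/2)^2$; splitting into $d/2$ blocks of $d^2/2$ steps, each block succeeds with probability at least $1/2$, so $\Pr(\tau> d^3/4)\le 2^{-d/2}=n^{-1/2}\le(\om\NP)^{-1}$. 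A union bound over particles shows every particle makes at most $d^3/4=O(\log^3 n)$ walk steps, and then the standard device from the introduction ($\TD\le \NP x/2$ if no particle makes more than $x$ walk steps) gives $\TD=O(\NP\log^3 n)$. Your proposal never closes the re-meeting case near the origin (you yourself flag it as ``the delicate part''), and without the observation above the claimed $O(\log^3 n)$ walk steps per particle is not established.
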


\begin{proof}

Note that if some particle reaches a distance greater than $c$ from the origin, it must have first reached distance $c$ and there met or later be joined by another particle for it to have moved on. If this second particle had previously already reached distance $c$ before visiting the common vertex where these particles meet, then it must have met some other particle at the place where it first reached distance $c$. Iterating, we see that at some point, there must exist a pair of particles who first reached distance $c$ at the same vertex.

There are $\binom{d}{c}$ vertices at distance $c$ from the origin, and the event that a given vertex is the first a particle visits at that distance (in it's predetermined walk) is uniformly distributed. Therefore the probability of there existing a pair of particles sharing a common vertex as their first visit to distance $c$, tends to $0$ if the number of particles $\NP\ll \sqrt{\binom{d}{c}}$.

Let $c=\frac{d}{2}$. Therefore, we have $\binom{d}{c}=\binom{d}{d/2}\geq \frac{2^{d-1}}{\sqrt{d/2}}$. Since $n=2^d$ we have $\NP \le \sqrt{n}/\om=2^{d/2}/\om$. The probability that two particles share a common vertex as their first visit to distance $c$, is equal to $\binom{d}{c}^{-1}\leq \frac{\sqrt{d/2}}{2^{d-1}}$ and so by taking a union bound over all possible pairs, we have the probability of there existing a pair of particles sharing a common vertex as their first visit to distance $c$ is at most

\[\binom{\NP}{2}\frac{\sqrt{d/2}}{2^{d-1}}\leq \frac{\NP^2 \sqrt{d/2}}{2^{d-1}}\leq \frac{\sqrt{2d}}{\om^2}. \]

Given $\om\gg \sqrt[4]{d}$, this tends to $0$ as required and as such with high probability, no particles reach a distance greater than $d/2$ from the origin in the dispersion process. In fact we are able to say something stronger. This results says that for every particle, it will not move further than the first visit to $d/2$ in its predetermined random walk. This allows us to give a bound on the running time of the process.

Consider the walk of a given particle. We note that if a particle is at distance less than $d/2$ from the origin, then since each vertex has degree $d$, it will move away from the origin with probability at least $1/2$, increasing its distance. Therefore the probability that this walk reaches distance $d/2$ in some number of steps is at least as high as that for a simple random walk on the line. The expected time it takes the simple walk on the line to reach distance $d/2$ is $(d/2)^2$ (by a simple martingale optional stopping time argument) and so the probability that either walk takes more than $\om(d/2)^2$ steps to reach a distance of $d/2$ is less than $\om^{-1}$. In particular, the probability that it takes more than $d^2/2$ steps is less than $1/2$ and so with probability at least $1/2$ the walk will have reached $d/2$ by time $d^2/2$. If not, then we restart the analysis, treating the walk from this time onwards as a new random walk. Although we may not be at the origin, this only reduces the expected time to reach distance $d/2$ and so again with probability at least $1/2$, independently of the previous round, we will reach distance $d/2$ in the next $d^2/2$ steps. We iterate this process $d/2$ times, taking at most $d^3/4$ steps in total, and so the probability the walk has not finished after all these rounds, is at most,

\[(1-1/2)^{d/2}=2^{-d/2}=n^{-1/2}\leq (\om \NP)^{-1} .\]

Taking a union bound over the $\NP$ particles, we see that with probability $1 - \om ^{-1}$, each particle will have reached distance $d/2$ and hence stopped moving in the process after taking at most $d^3/4$ steps.

Since at any time step, at least two particles must be moving or the process has ended, we have that after $(\NP/2)(d^3/4)$ time steps, each particle must have moved at least $d^3/4$ steps and hence the process will end. This tells us that with probability at least
$1 - \om ^{-1}$, the entire process will terminate in $O(\NP\log^3 n)$
 process steps as required.

 \end{proof}

\section{Acknowledgements}
We would like to thank Tony Johansson and Fiona Skerman for spotting a mistake in the manuscript of the paper and for offering suggestions on how to correct it.

\ignore{

\section{Concluding remarks}\label{Sim}

The question of the largest proportion of particles which can be dispersed
in polynomial time on finite graphs seems interesting.
For both the complete graph and the star, the dispersion number is $1/2$.
However the experimental evidence for paths and cycles suggests a dispersion number of at least $0.89$.
Figure \ref{plots} shows 20000 particles  dispersed more or less uniformly over 22500 vertices. The average number of steps made is higher for those particles which finish nearer the origin.

\begin{figure}[H]
\centerline{
\includegraphics[height=140pt, width=180pt]{figure2-20000.pdf}
\hspace*{1cm}
\includegraphics[height=140pt, , width=180pt]{figure1-20000.pdf}
}
\caption{
Simulation results for the Path graph. Left hand figure. Cumulate density at dispersion. Right hand figure. Walk steps taken against final position.}\label{plots}
\end{figure}
}



\begin{thebibliography}{9}
\bibitem{Beal} J. Beal.
Superdiffusive dispersion and mixing of swarms
with reactive L\'evy walks.  7th International Conference on Self-Adaptive and Self-Organizing Systems (SASO), IEEE 2013.

\bibitem{Beal1} J. Beal.
Superdiffusive dispersion and mixing of swarms. ACM Transactions on Autonomous and Adaptive Systems (TAAS) 2015.


\bibitem{DF}
P. Diaconis and W. Fulton. A growth model, a game, an algebra, Lagrange
inversion, and characteristic classes, Rend. Sem. Mat. Univ. Pol. Torino 49
(1991), 95�-119.

\bibitem{Feller_Vol1_2Edition}
W, Feller. An Introduction to Probability Theory and its Applications, Volume I, 2nd Edition, Wiley, 1968.



\bibitem{LBG}
G. F. Lawler, M. Bramson, and D. Griffeath.
Internal diffusion limited aggregation, Ann. Probab. 20 (1992), 2117�-2140.


\bibitem{LP}
L. Levine and Y. Peres.
The rotor-router shape is spherical.
The Mathematical Intelligencer, Volume 27, (2005) 9�-11.

\end{thebibliography}
\end{document}